\documentclass[12pt,a4paper]{llncs} 

\usepackage{fullpage}
\usepackage{amssymb,amsmath,stmaryrd,mathrsfs,nicefrac}
\usepackage[round]{natbib}

\usepackage{lscape}

\usepackage[mathscr]{eucal}
\usepackage{pstricks,pst-node,pst-text,pst-3d}
\usepackage{graphicx}

\newtheorem{Theorem}{Theorem}
\newtheorem{Lemma}{Lemma}

\def\cB{\mathcal{B}}
\def\cC{\mathcal{C}}

\def\cS{\mathcal{S}}
\def\cT{\mathcal{T}}
\def\ub{\underline{b}}
\def\ob{\overline{b}}
\def\uB{\underline{B}}
\def\oB{\overline{B}}
\def\uBp{\underline{B'}}
\def\oBp{\overline{B'}}
\def\supp{\mathrm{supp}}

\def\BB{{\mathbb B}}
\def\RR{{\mathbb R}}
\def\OWA{\mathrm{OWA}}

\def \srightarrow{\stackrel{1}{\rightarrow}}

\begin{document}

\title{A model of anonymous influence \\ with anti-conformist agents}

\author{Michel Grabisch\inst{1}\thanks{Corresponding author.} \and Alexis Poindron\inst{2} \and Agnieszka Rusinowska\inst{3}}

\institute{Paris School of Economics, Universit\'{e} Paris I Panth\'{e}on-Sorbonne  \\
Centre d'Economie de la Sorbonne, 106-112 Bd de l'H\^{o}pital, 75647 Paris Cedex 13, France \\
\email{michel.grabisch@univ-paris1.fr} 
\and 
Universit\'{e} Paris I Panth\'{e}on-Sorbonne, 
Centre d'Economie de la Sorbonne \\
\email{alexis.poindron@laposte.net}
\and
CNRS, Paris School of Economics, Centre d'Economie de la Sorbonne
\\
\email{agnieszka.rusinowska@univ-paris1.fr}}

\date{Version of \today}

\maketitle

\begin{abstract}
We study a stochastic model of anonymous influence with conformist and
anti-conformist individuals. Each agent with a `yes' or `no' initial opinion on
a certain issue can change his opinion due to social influence. We consider
anonymous influence, which depends on the number of agents having a certain
opinion, but not on their identity. An individual is conformist/anti-conformist
if his probability of saying `yes' increases/decreases with the number of `yes'-agents. 
We focus on three classes of aggregation rules (pure conformism, pure
anti-conformism, and mixed aggregation rules) and examine two types of society
(without, and with mixed agents).  For both types we provide a complete
qualitative analysis of convergence, i.e., identify all absorbing classes and
conditions for their occurrence. Also the pure case with infinitely many
individuals is studied. We show that, as expected, the presence of
anti-conformists in a society brings polarization and instability: polarization
in two groups, fuzzy polarization (i.e., with blurred frontiers), cycles,
periodic classes, as well as more or less chaotic situations where at any time
step the set of `yes'-agents can be any subset of the society. Surprisingly, the
presence of anti-conformists may also lead to opinion reversal: a majority group
of conformists with a stable opinion can evolve by a cascade phenomenon towards
the opposite opinion, and remains in this state.

\end{abstract}

\noindent {\bf JEL Classification}: C7, D7, D85

\vspace{5 mm}

\noindent {\bf Keywords}: influence, anonymity, anti-conformism, convergence, absorbing class

\section{Introduction}

This paper is devoted to anti-conformism in the framework of opinion formation with anonymous influence. Despite the fact that anti-conformism plays a crucial role in many social and economic situations, and can naturally explain human behavior and various dynamic processes, this phenomenon did not receive enough attention in the literature. 

The seminal work of \cite{deg74} and some of its extensions consider a non-anonymous positive influence in which agents update their opinions by using a weighted average of opinions of their neighbors. However, in many situations, like opinions and comments given on the internet,
the identity of the agents is not known, or at least, there is no clue on the
reliability or kind of personality of the agents. Therefore, agents can be
considered as anonymous, and influence is merely due to the \textit{number} of
agents having a certain opinion, not their identity. 
 \cite{foe-gra-rus13} investigate such an anonymous social influence, but restrict their attention to the conformist behavior. They depart from a general framework of influence based on aggregation functions (\cite{gra-rus13}), where every individual updates his opinion by aggregating the agents' opinions which determines the probability that his opinion will be `yes' in the next period. Instead of allowing for arbitrary aggregation functions, \cite{foe-gra-rus13} consider anonymous aggregating. However, both frameworks of \cite{gra-rus13} and \cite{foe-gra-rus13} cover only positive influence (imitation), since by definition aggregation functions are nondecreasing, and hence cannot model anti-conformism. 

\bigskip

Our aim is to study opinion formation under anonymous influence in societies
with conformist and anti-conformist individuals. We focus on three classes of
aggregation rules that can be used by the agents when revising their opinions:
pure conformism, pure anti-conformism, and mixed aggregation rules. As a
consequence, we distinguish three types of agents: pure conformists who are
more likely to say `yes' when there are more agents who said `yes' in the last
period, pure anti-conformists who are less likely to do so, and mixed agents.
All purely conformist agents are assumed to share the same minimum number of
`yes' for them to say `yes' with positive chance and the same minimum number of
`yes' for them to say `yes' with chance of 1. Similarly, all purely
anti-conformist agents share the same minimum number of `yes' for them to say
`no' with positive chance and the same minimum number of `yes' for them to say
`no' with chance of 1. We call these four parameters the influenceability parameters.
We consider two types of a society: without mixed
agents, and containing mixed agents. For both cases we provide a complete
qualitative analysis of convergence, i.e., we identify all absorbing classes and
conditions for their occurrence. This full characterization of the absorbing
classes is based on the size comparison among the four influenceability
parameters and the number of conformists and anti-conformists in the
society.  

\bigskip

Our findings bring precise answers to the following fundamental questions: {\it
  What is the impact of the presence of anti-conformists on a society being
  mainly conformist? Is a chaotic or unstable situation possible? Is opinion
  reversal possible?} 

The exact description of the impact is done through our
main Theorems~\ref{th:main} and \ref{th:main2}, giving all possible absorbing
classes, that is, all possible states of the society in the long run, and conditions of their existence. The complexity of these results, however, asks
for a further analysis which would extract the main trends. We have conducted
such an analysis, supposing that the size of the society is very large, and
considering several typical situations, e.g., the same influenceability parameters
among the conformists and anti-conformists, a small proportion of
anti-conformists, etc. This permits to answer the two other questions. 

About the
possibility of a chaotic or unstable situation, without much surprise, the
answer is positive. We have distinguished however several types of unstable
situations, ranked in increasing order of unstability: fuzzy polarization,
cycle, fuzzy cycle and chaos. Polarization is the well-known phenomenon where a
part of the society converges to `yes' and the other part to `no'. Obviously,
this situation can arise here, with conformists and anti-conformists having
opposite opinion, provided the proportion of the latter is not too high. Fuzzy
polarization means that instead of having two stable groups, there is a kind of
oscillation around the groups of conformists and anti-conformists. When the set
of `yes'-agents evolve according to a periodic sequence of subsets of the
society, we speak of a cycle. For instance, it is easy to see that the cycle
$N^a,N^c,N^a,\ldots$, where $N^a, N^c$ are respectively the set of
anti-conformists and the set of conformists, can arise, provided the number of
anti-conformists is large enough. A fuzzy cycle is nothing other than a
periodic class in the theory of Markov chains. That is, there is a cycle of
{\it collections} of subsets of the society, and at each time step, a subset is picked
at random in the collection under consideration. Finally, chaos means that at
each time step, any subset of the society can be the set of `yes'-agents. 

Finally, is opinion reversal possible? First, some explanation is necessary. In
a purely conformist society, there is quick convergence to a consensus, either
on `yes' or on `no'. Therefore, the opinion remains constant for ever, and no
change can occur. Opinion reversal would mean that, starting from a state where
a large majority of the society has a stable opinion, there would be an
evolution leading for that majority group to the opposite opinion (cascade
phenomenon). This is exactly what is most feared by, e.g., politicians during
some election, or any leader governing some society of
individuals. Surprisingly, such a phenomenon can occur, even with a small number
of anti-conformists, under certain circumstances that we describe
precisely. Hence, contrarily to the intuition which tells us that introducing
anti-conformists is simply introducing unstability and chaos, we have shown that
it is quite possible to manipulate, by a suitable choice of the influenceability
parameters and the proportion of anti-conformists, the final opinion of the
conformists. We consider this result as one of the main findings of the paper.

\bigskip

Our framework is suitable for many natural applications. It can explain various phenomena like stable and persistent shocks, large fluctuations, stylized facts in the industry of fashion, in particular its intrinsic dynamics, booms and burst in the frequency of surnames, etc. If fashion were only a matter of conformist imitation in an anonymous framework, there would be no trends over time. 
Anti-conformism and anti-coordination when individuals have an incentive to differ from what others do can also be detected, e.g., in organizational settings. For example, the choice of a firm to go compatible or not with other firms can be seen as a problem of anti-conformism. Anti-coordination can be optimal when adopting different roles or having complementary skills is necessary for a successful interaction or realization of a task in a team.  

\bigskip

The rest of the paper is structured as follows. In Section \ref{sec:model} we introduce the model of anonymous influence with anti-conformist agents and distinguish between two kinds of a society: pure case (containing only pure anti-conformists and pure conformists) and mixed case (including also mixed agents). We also explain how the present paper extends and differs from our previous work on conformism. The convergence analysis of both cases as well as of the pure case with the number of agents tending to infinity is provided in Section \ref{sec:convergence}.  In Section \ref{sec:related-literature} we deliver a brief overview of the related literature. We mention some concluding remarks in Section \ref{sec:conclusion}. 
The proof of our main results on the possible absorbing classes in the model is given in the appendix.

\section{The model}\label{sec:model} 

\subsection{Description of the model}\label{subsec:description}

We consider a society $N$ with $|N|=n$ agents who make a `yes' or `no' decision
on some issue (binary choice).  Initially every agent has an opinion on that
issue, but by knowing the opinion of the others or by some social interaction,
in each period the agent may modify his opinion due to mutual influence. In
other words, there is an evolution in time of the opinion of the agents, which
may or may not stop at some stable state of the society.

We define the {\it state} of the society at a given time as the set $S\subseteq
N$ of agents whose opinion is `yes'. As usual, the cardinality of a set is denoted
by the corresponding lower case, e.g., $s=|S|$.  Our fundamental assumption is
that the evolution of the state is ruled by a homogeneous Markov chain, that is,
the state evolves at discrete time steps, the new opinions depend only on the
opinions among the society in the last period, and the transition matrix giving
the probability of all possible transitions from one state to another is
constant over time.

We study the opinion formation process of the society, where the updating of opinion
relies only on how many agents said `yes' and how many said `no' in the previous
period, disregarding who said `yes' and who said `no'. For this reason we call
this opinion formation process {\it anonymous}. Both conformist and anti-conformist behaviors
are allowed, i.e., agents can revise their opinions by following the trend as
well as in a way contrary to the trend.

We now formalize the previous ideas. An {\it (anonymous) aggregation rule}
describes for a given agent how the opinions of the other agents are aggregated
in order to determine the updated opinion of this agent. Specifically, it is a
mapping $p$ from $\{0,1,\ldots,n \}$ to $[0,1]$, assigning to any $0\leq s\leq
n$, representing the number of agents saying `yes', a quantity $p(s)$ which is
the probability of saying `yes' at next time step (and consequently, $1-p(s)$ is
the probability of saying `no').  We focus on three classes of aggregation
rules:
\begin{equation}\label{eq:1new}
A^c = \{ p \mid p \text{ is nondecreasing and satisfies } p(0)=0 \text{ and } p(n)=1 \} \quad \text{{\it (pure conformism)}}
\end{equation}
\begin{equation}\label{eq:2new}
A^a = \{ p \mid p  \text{ is nonincreasing and satisfies } p(0)=1 \text{ and }
p(n)=0\} \quad \text{{\it (pure anti-conformism)}}
\end{equation}
and {\it mixed aggregation rules}:
\begin{equation}\label{eq:3new}
A^m = \{ p \mid p = \alpha q^c + (1-\alpha) q^a \text{ with } \alpha \in \left]0,1\right[, q^c \in A^c, q^a \in A^a \}.
\end{equation}
Note that $p\in A^m$ is not necessarily monotone, and that $p(0)>0$ and
$p(n)<1$. 

Let $p_i$ be agent $i$'s (anonymous) aggregation rule. Supposing that the update of opinion is done
independently accross the agents, the probability of transition from a state $S$ to a state $T$ is
\begin{equation}\label{eq:1}
\lambda_{S,T} = \prod_{i\in T}p_i(s)\prod_{i\not\in T}(1-p_i(s)).
\end{equation}
Observe that, as the aggregation rule is anonymous, the probability of
transition to $T$ is the same for all states $S$ having the same size $s$. 

We distinguish between three types of agents. We say that agent $i$ is \textit{purely conformist} if $p_i \in A^c$, \textit{purely anti-conformist} if $p_i\in A^a$, and is a \textit{mixed agent} if $p_i \in A^m$. 
The society of agents is partitioned into
\[
N = N^c \cup N^a \cup N^m
\]
where $N^c$ is the set of purely conformist agents, $N^a$
the set of purely anti-conformist agents, and $N^m$ is the set
of mixed agents. 
When $N^m=\emptyset$, we call it the \textit{pure case}.

We make the following simplifying assumption: we suppose that all purely
conformist agents, although having possibly different aggregation rules, have in
common the same minimum number
of `yes' for them to say `yes' with positive chance, and the same minimum number
of `yes' for them to say `yes' with probability 1.
Formally, we require that for each $i \neq j$ in $N^c$,
\begin{equation}\label{eq:4new}
\min \{s \mid p_i (s) > 0\} = \min \{s \mid p_j (s) > 0\} =: l^c + 1
\end{equation}
\begin{equation}\label{eq:5new}
\min \{s \mid p_i (s) = 1\} = \min \{s \mid p_j (s) = 1\} =: n - r^c .
\end{equation}
$l^c$ can be interpreted as the (maximum) number of `yes' for which no effect on
the probability of saying `yes' arises, while $r^c$ is the (maximum) number of `no' for which no
effect on this probability is visible (see Figure~\ref{fig:lr}). This assumption
permits to conduct a precise analysis of convergence while remaining reasonable
(it can be considered as a mean field approximation). 
\begin{figure}[htb]
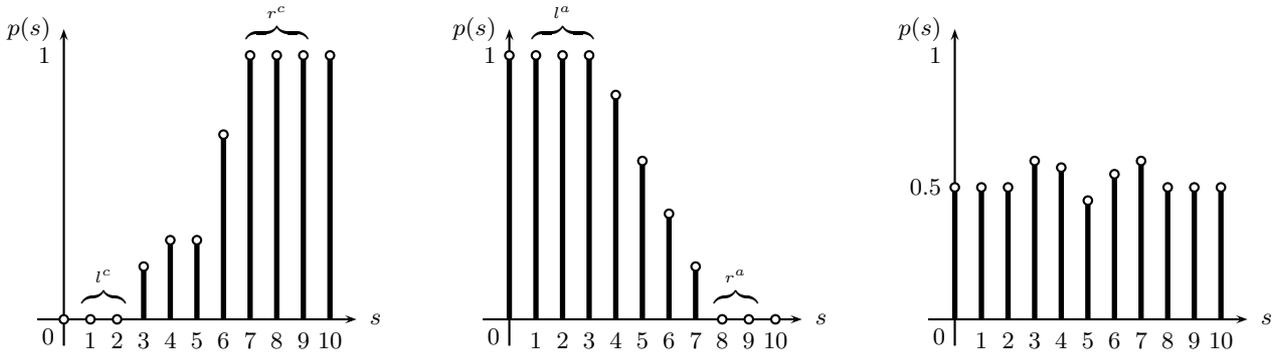

\begin{center}
\psset{unit=0.7cm}
\pspicture(-0.5,-1)(5.5,5)
\psline{->}(-0.5,0)(5.5,0)
\psline{->}(0,-0.5)(0,5.5)
\psline[linewidth=2pt](0.5,0)(0.5,0)
\psline[linewidth=2pt](1,0)(1,0)
\psline[linewidth=2pt](1.5,0)(1.5,1)
\psline[linewidth=2pt](2,0)(2,1.5)
\psline[linewidth=2pt](2.5,0)(2.5,1.5)
\psline[linewidth=2pt](3,0)(3,3.5)
\psline[linewidth=2pt](3.5,0)(3.5,5)
\psline[linewidth=2pt](4,0)(4,5)
\psline[linewidth=2pt](4.5,0)(4.5,5)
\psline[linewidth=2pt](5,0)(5,5)
\pscircle[fillstyle=solid](0,0){0.1}
\pscircle[fillstyle=solid](0.5,0){0.1}
\pscircle[fillstyle=solid](1,0){0.1}
\pscircle[fillstyle=solid](1.5,1){0.1}
\pscircle[fillstyle=solid](2,1.5){0.1}
\pscircle[fillstyle=solid](2.5,1.5){0.1}
\pscircle[fillstyle=solid](3,3.5){0.1}
\pscircle[fillstyle=solid](3.5,5){0.1}
\pscircle[fillstyle=solid](4,5){0.1}
\pscircle[fillstyle=solid](4.5,5){0.1}
\pscircle[fillstyle=solid](5,5){0.1}
\uput[180](0,5.5){$p(s)$}
\uput[180](0,5){1}
\uput[225](0,0){0}
\uput[-90](0.5,0){1}
\uput[-90](1,0){2}
\uput[-90](1.5,0){3}
\uput[-90](2,0){4}
\uput[-90](2.5,0){5}
\uput[-90](3,0){6}
\uput[-90](3.5,0){7}
\uput[-90](4,0){8}
\uput[-90](4.5,0){9}
\uput[-90](5,0){10}
\uput[0](5.5,0){$s$}
\uput[90](0.75,0){$\overbrace{\rule{0.5cm}{0cm}}^{l^c}$}
\uput[90](4,5){$\overbrace{\rule{0.8cm}{0cm}}^{r^c}$}
\endpspicture
\hfill
\pspicture(-0.5,-1)(5.5,5)
\psline{->}(-0.5,0)(5.5,0)
\psline{->}(0,-0.5)(0,5.5)
\psline[linewidth=2pt](0,0)(0,5)
\psline[linewidth=2pt](0.5,0)(0.5,5)
\psline[linewidth=2pt](1,0)(1,5)
\psline[linewidth=2pt](1.5,0)(1.5,5)
\psline[linewidth=2pt](2,0)(2,4.25)
\psline[linewidth=2pt](2.5,0)(2.5,3)
\psline[linewidth=2pt](3,0)(3,2)
\psline[linewidth=2pt](3.5,0)(3.5,1)
\psline[linewidth=2pt](4,0)(4,0)
\psline[linewidth=2pt](4.5,0)(4.5,0)
\psline[linewidth=2pt](5,0)(5,0)
\pscircle[fillstyle=solid](0,5){0.1}
\pscircle[fillstyle=solid](0.5,5){0.1}
\pscircle[fillstyle=solid](1,5){0.1}
\pscircle[fillstyle=solid](1.5,5){0.1}
\pscircle[fillstyle=solid](2,4.25){0.1}
\pscircle[fillstyle=solid](2.5,3){0.1}
\pscircle[fillstyle=solid](3,2){0.1}
\pscircle[fillstyle=solid](3.5,1){0.1}
\pscircle[fillstyle=solid](4,0){0.1}
\pscircle[fillstyle=solid](4.5,0){0.1}
\pscircle[fillstyle=solid](5,0){0.1}
\uput[180](0,5.5){$p(s)$}
\uput[180](0,5){1}
\uput[225](0,0){0}
\uput[-90](0.5,0){1}
\uput[-90](1,0){2}
\uput[-90](1.5,0){3}
\uput[-90](2,0){4}
\uput[-90](2.5,0){5}
\uput[-90](3,0){6}
\uput[-90](3.5,0){7}
\uput[-90](4,0){8}
\uput[-90](4.5,0){9}
\uput[-90](5,0){10}
\uput[0](5.5,0){$s$}
\uput[90](1,5){$\overbrace{\rule{0.8cm}{0cm}}^{l^a}$}
\uput[90](4.25,0){$\overbrace{\rule{0.1cm}{0cm}}^{r^a}$}
\endpspicture
\hfill
\pspicture(-0.5,-1)(5.5,5)
\psline{->}(-0.5,0)(5.5,0)
\psline{->}(0,-0.5)(0,5.5)
\psline[linewidth=2pt](0,0)(0,2.5)
\psline[linewidth=2pt](0.5,0)(0.5,2.5)
\psline[linewidth=2pt](1,0)(1,2.5)
\psline[linewidth=2pt](1.5,0)(1.5,3)
\psline[linewidth=2pt](2,0)(2,2.875)
\psline[linewidth=2pt](2.5,0)(2.5,2.25)
\psline[linewidth=2pt](3,0)(3,2.75)
\psline[linewidth=2pt](3.5,0)(3.5,3)
\psline[linewidth=2pt](4,0)(4,2.5)
\psline[linewidth=2pt](4.5,0)(4.5,2.5)
\psline[linewidth=2pt](5,0)(5,2.5)
\pscircle[fillstyle=solid](0,2.5){0.1}
\pscircle[fillstyle=solid](0.5,2.5){0.1}
\pscircle[fillstyle=solid](1,2.5){0.1}
\pscircle[fillstyle=solid](1.5,3){0.1}
\pscircle[fillstyle=solid](2,2.875){0.1}
\pscircle[fillstyle=solid](2.5,2.25){0.1}
\pscircle[fillstyle=solid](3,2.75){0.1}
\pscircle[fillstyle=solid](3.5,3){0.1}
\pscircle[fillstyle=solid](4,2.5){0.1}
\pscircle[fillstyle=solid](4.5,2.5){0.1}
\pscircle[fillstyle=solid](5,2.5){0.1}
\uput[180](0,5.5){$p(s)$}
\uput[180](0,5){1}
\uput[180](0,2.5){0.5}
\uput[225](0,0){0}
\uput[-90](0.5,0){1}
\uput[-90](1,0){2}
\uput[-90](1.5,0){3}
\uput[-90](2,0){4}
\uput[-90](2.5,0){5}
\uput[-90](3,0){6}
\uput[-90](3.5,0){7}
\uput[-90](4,0){8}
\uput[-90](4.5,0){9}
\uput[-90](5,0){10}
\uput[0](5.5,0){$s$}
\endpspicture
\end{center}
\caption{Typical aggregation rules for conformist agents (left), anti-conformists
  (center), and mixed agents (right) with $n=10$ agents. The latter is obtained by mixing the two
  first ones with $\alpha=0.5$}
\label{fig:lr}
\end{figure}

Similarly, we assume that all purely anti-conformist agents share the same minimum
number of `yes' for them to say `no' with positive chance and the same minimum number of `yes' for
them to say `no' with probability 1, i.e., 
for each $i \not= j$ in $N^a$,
\begin{equation}\label{eq:6new}
\min \{s \mid p_i (s) < 1\} = \min \{s \mid p_j (s) < 1\} =: l^a + 1
\end{equation}
\begin{equation}\label{eq:7new}
\min \{s \mid p_i (s) = 0\} = \min \{s \mid p_j (s) = 0\} =: n - r^a. 
\end{equation}
As above, $l^a$ (respectively, $r^a$) is the maximum number of `yes'
(respectively, `no') for which no effect on the probability of saying `yes' is
visible (see Figure~\ref{fig:lr}).

These assumptions carry over mixed agents: any mixed agent $i$ has an
aggregation rule $p_i$ which is a convex combination with coefficient $\alpha_i$
of a conformist aggregation rule (with fixed $l^c,r^c$) and an anti-conformist
aggregation rule (with fixed $l^a,r^a$). We allow, however, that two mixed
agents have different convex combination coefficients $\alpha_i$. Hence, a mixed
agent $i$ can be seen as an agent who does not have a fixed behavior, but who is
conformist with probability $\alpha_i$ and anti-conformist with probability
$1-\alpha_i$. On Figure~\ref{fig:lr}, we can see that a mixed agent with
$\alpha_i=1/2$ has a very indecisive behavior. 

Based on the above assumptions, in this paper we fully characterize all possible absorbing classes based on size comparison among $l^c,r^c,l^a,r^a$ and the number of conformist and anti-conformist agents in the society.

\subsection{Basic properties of the transition matrix}
We recall that (see, e.g., \cite{kem-sne76,sen06}) for a
Markov chain with set of states $E$ and transition matrix $\Lambda$ and its
associated digraph $\Gamma$, a \textit{class} is a subset $C$ of states such
that for all states $e,f\in C$, there is a path in $\Gamma$ from $e$ to $f$, and
$C$ is maximal w.r.t. inclusion for this property. A class is \textit{absorbing} if for every $e\in C$ there is no arc in $\Gamma$ from $e$
to a state outside $C$. An absorbing class $C$ is \textit{periodic of period $k$}
if it can be partitioned in blocks $C_1,\ldots, C_k$ such that for
$i=1,\ldots,k$, every outgoing arc of every state $e\in C_i$ goes to some state
in $C_{i+1}$, with the convention $k+1=1$. When each $C_1,\ldots, C_k$
  reduces to a single state, one may speak of \textit{cycle of length $k$}, by
  analogy with graph theory.

In our framework, states are subsets of agents and therefore classes are collections
of sets, which we denote by calligraphic letters, like $\cC,\cB$, etc. By
definition, an absorbing class indicates the final state of opinion of the
society. For instance, if an absorbing class reduces to a single state $S$, it means that in the
long run, the society is dichotomous (unless $S=N$ or $S=\emptyset$, in which
case consensus is reached): there is a set of agents $S$ who say `yes' forever,
while the other ones say `no' forever. Otherwise, there are endless transitions
with some probability from one set $S\in \cC$ to another one $S'\in \cC$.

\bigskip

We now study the properties of the transition matrix $\Lambda$, with
entries $\lambda_{S,T}$, $S,T\in 2^N$, where $\lambda_{S,T}$ is given by
(\ref{eq:1}).
Our aim is to find under which conditions one has a possible transition from $S$
to $T$, i.e., $\lambda_{S,T}>0$. From (\ref{eq:1}), we have:
\[
\lambda_{S,T}>0\Leftrightarrow [p_i(s)>0 \ \ \forall i\in T] \ \ \& \ \ [p_i(s)<1\ \ \forall
i\not\in T].
\]

We start with the pure case, i.e., $N^m=\emptyset$. We first observe that $p_i(0)=1$
if $i\in N^a$ and 0 otherwise, and $p_i(n)=1$ if $i\in N^c$ and 0
otherwise. Therefore, we have in any case the sure transitions
\[
\lambda_{\emptyset,N^a}=1, \quad \lambda_{N,N^c}=1.
\]
Moreover, we get immediately from (\ref{eq:4new}) to (\ref{eq:7new}) for any $s\neq 0, n$,
\begin{align}
(i\in N^c)\qquad  & p_i(s) >0 && \Leftrightarrow && s>l^c\label{eq:1a}\\
 & p_i(s) <1 && \Leftrightarrow  &&
  s< n-r^c\label{eq:3a}\\
(i\in N^a)\qquad  & p_i(s) >0 && \Leftrightarrow && s<n-r^a\label{eq:4a}\\
 & p_i(s) <1 && \Leftrightarrow &&  s> l^a\label{eq:6a}.
\end{align}

By combining these conditions and their negation in various ways, one can see
that we can only have transitions to $\emptyset, N, N^a,N^c$ and any of the
subsets or supersets of $N^a,N^c$. A convenient notation here is the interval
notation: for two sets $S\subseteq T$, $[S,T]$ denotes the collection of sets
$K$ such that $S\subseteq K\subseteq T$. For instance, $[\emptyset, N^a]$ and
$[N^a,N]$ denote respectively the collection of subsets of $N^a$ and the
collection of supersets of $N^a$. Table~\ref{tab:1} summarizes the possible
transitions, adding also those from $S=\emptyset$ and $S=N$.
\begin{table}[htb]
\begin{center}
\begin{tabular}{|c|c|c|c|}\hline
 & $0\leq s\leq l^c$ & $l^c<s<n-r^c$ &    $n-r^c\leq s\leq n$\\ \hline
$0\leq s\leq l^a$ & $N^a$ & $T\in[N^a,N] $& $N$\\ \hline
$l^a<s<n-r^a$ & $T\in[\emptyset,N^a]$ & $T\in 2^N$& $T\in[N^c,N]$\\ \hline
$n-r^a\leq s\leq n$ & $\emptyset$ & $T\in[\emptyset,N^c]$& $N^c$\\ \hline
\end{tabular}
\vspace{2 mm}
\caption{Possible transitions from $S\in 2^N$ in the pure case}
\label{tab:1}
\end{center}
\end{table}

Let us introduce $Z=(l^c,r^c,l^a,r^a)$ and write $p_i^Z$ to emphasize the dependency of $p_i$ on
these parameters (and similarly for $\lambda_{S,T}$). Equations (\ref{eq:1a}) to
(\ref{eq:6a}) show striking symmetries, in particular, when interchanging conformists and
anti-conformists. $Z$ being
given, we introduce the reversal of $Z$, $Z^\partial :=(r^c,l^c,r^a,l^a)$, and the interchange of $Z$,
$Z'=(l^a,r^a,l^c,r^c)$ which amounts to interchanging conformists with
anti-conformists. Considering these operations, we observe the following
symmetries:
\begin{enumerate}
\item Interchange:
\begin{align*}
p_i^Z(s)>0 \text{ for } i\in N^c && \Leftrightarrow && p_i^{Z'}(s)<1 \text{ for }
i\in N^a\\
p_i^Z(s)<1 \text{ for } i\in N^c && \Leftrightarrow && p_i^{Z'}(s)>0 \text{ for }
i\in N^a
\end{align*}
(idem with $N^a$, $N^c$ exchanged)
\item Reversal:
\begin{align*}
p_i^Z(s)>0 \text{ for } i\in N^c && \Leftrightarrow && p_i^{Z^\partial}(n-s)<1 \text{ for }
i\in N^c\\
p_i^Z(s)<1 \text{ for } i\in N^c && \Leftrightarrow && p_i^{Z^\partial}(n-s)>0 \text{ for }
i\in N^c
\end{align*}
(idem with $N^a$, $N^c$ exchanged)
\item Interchange and reversal:
\begin{align*}
p_i^Z(s)>0 \text{ for } i\in N^c && \Leftrightarrow && p_i^{(Z^\partial)'}(n-s)>0 \text{ for }
i\in N^a\\
p_i^Z(s)<1 \text{ for } i\in N^c && \Leftrightarrow && p_i^{(Z^\partial)'}(n-s)<1 \text{ for }
i\in N^a
\end{align*}
(idem with $N^a$, $N^c$ exchanged)
\end{enumerate}
The second case is of particular interest and leads to the following lemma.
\begin{Lemma}[\bf symmetry principle]\label{lem:sym}
Let $S,T\in 2^N$ and $Z=(l^c,r^c,l^a,r^a)$. The following equivalence holds:
\[
\lambda^Z_{S,T}>0\Leftrightarrow \lambda^{Z^\partial}_{N\setminus S,N\setminus T}>0.
\]
\end{Lemma}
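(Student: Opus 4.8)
The plan is to unwind the definition of $\lambda_{S,T}$ in \eqref{eq:1} and reduce the claimed equivalence to the four reversal symmetries already recorded for the individual aggregation rules $p_i^Z$. Recall that $\lambda^Z_{S,T}>0$ holds if and only if $p_i^Z(s)>0$ for every $i\in T$ and $p_i^Z(s)<1$ for every $i\notin T$, where $s=|S|$. On the right-hand side, writing $S'=N\setminus S$ and $T'=N\setminus T$, we have $|S'|=n-s$, so $\lambda^{Z^\partial}_{S',T'}>0$ is equivalent to $p_i^{Z^\partial}(n-s)>0$ for every $i\in T'$ (i.e.\ $i\notin T$) and $p_i^{Z^\partial}(n-s)<1$ for every $i\notin T'$ (i.e.\ $i\in T$). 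So the whole lemma amounts to showing that ``$p_i^Z(s)>0$ and $i\in T$'' corresponds to ``$p_i^{Z^\partial}(n-s)<1$ and $i\in T$'', and ``$p_i^Z(s)<1$ and $i\notin T$'' corresponds to ``$p_i^{Z^\partial}(n-s)>0$ and $i\notin T$''.

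First I would split the argument by the type of agent $i$ (purely conformist or purely anti-conformist; since $N^m=\emptyset$ in the pure case there are only these two). For $i\in N^c$, the second displayed group of symmetries (Reversal) gives exactly $p_i^Z(s)>0\Leftrightarrow p_i^{Z^\partial}(n-s)<1$ and $p_i^Z(s)<1\Leftrightarrow p_i^{Z^\partial}(n-s)>0$. For $i\in N^a$, the parenthetical remark ``idem with $N^a,N^c$ exchanged'' supplies the analogous pair of equivalences. Thus for every agent $i$, regardless of type, the condition imposed on $p_i^Z(s)$ by membership of $i$ in $T$ (on the left) is equivalent to the condition imposed on $p_i^{Z^\partial}(n-s)$ by membership of $i$ in $T$ viewed from the complemented side (on the right). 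Conjoining over all $i\in N$ and grouping the factors according to whether $i\in T$ or $i\notin T$ then yields $\lambda^Z_{S,T}>0\Leftrightarrow\lambda^{Z^\partial}_{S',T'}>0$.

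Concretely, I would write: $\lambda^Z_{S,T}>0$ iff $\bigl[p_i^Z(s)>0\ \forall i\in T\bigr]$ and $\bigl[p_i^Z(s)<1\ \forall i\notin T\bigr]$; by the Reversal equivalences (for $i\in N^c$ and, mutatis mutandis, for $i\in N^a$) this holds iff $\bigl[p_i^{Z^\partial}(n-s)<1\ \forall i\in T\bigr]$ and $\bigl[p_i^{Z^\partial}(n-s)>0\ \forall i\notin T\bigr]$; relabelling via $i\in T\Leftrightarrow i\notin N\setminus T$ and $i\notin T\Leftrightarrow i\in N\setminus T$, and noting $n-s=|N\setminus S|$, this is precisely the condition $\lambda^{Z^\partial}_{N\setminus S,N\setminus T}>0$.

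I do not anticipate a genuine obstacle here; the lemma is essentially a bookkeeping consequence of the symmetries already displayed. The only point requiring minor care is to make sure the roles of $T$ and its complement are swapped consistently when passing from the ``$>0$'' factors to the ``$<1$'' factors under reversal, so that the product structure of \eqref{eq:1} for $\lambda^{Z^\partial}_{N\setminus S,N\setminus T}$ is reproduced exactly; this is what the relabelling step above handles. It is also worth stating explicitly that we work in the pure case (so that the agent-type case split is exhaustive), which matches the setting in which equations \eqref{eq:1a}--\eqref{eq:6a} and the listed symmetries were derived.
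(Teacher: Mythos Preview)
Your proposal is correct and follows essentially the same approach as the paper's proof: both unwind the positivity condition $\lambda^Z_{S,T}>0$ into the per-agent conditions $p_i^Z(s)>0$ for $i\in T$ and $p_i^Z(s)<1$ for $i\notin T$, apply the Reversal equivalences (item~(ii)) agentwise, and then recognize the resulting conditions as exactly those characterizing $\lambda^{Z^\partial}_{N\setminus S,N\setminus T}>0$. Your version is slightly more explicit about the case split by agent type and the relabelling of $T$ versus $N\setminus T$, but there is no substantive difference.
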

\begin{proof}
Letting $\lambda^Z_{S,T}>0$ means that for every $i\in N\setminus T$, $0\leq
p^Z_i(s)<1$, and for every $i\in T$, $0<p^Z_i(s)\leq 1$. Using the equivalences
in (ii), we find that for every $i\in N\setminus T$,
$0<p^{Z^\partial}_i(n-s)\leq 1$ and for every $i\in T$, $0\leq
p^{Z^\partial}_i(n-s)<1$. But this means that
$\lambda^{Z^\partial}_{N\setminus S,N\setminus T}>0$.
\qed\ 
\end{proof}


\bigskip

Next we consider the mixed case, assuming $p_i = \alpha_iq^c + (1-\alpha_i)q^a$
with $q^c\in A^c$, $q^a\in A^a$ and fixed $l^c,r^c,l^a,r^a$. 
We can easily derive the conditions for $p_i(s)$ to be 0 or
1, using (\ref{eq:1a}) to (\ref{eq:6a}). For any $0<s<n$, we obtain
\begin{align}
(i\in N^m) \qquad & p_i(s)=0 && \Leftrightarrow && n-r^a\leq s\leq l^c\label{eq:7}\\
\qquad & p_i(s)=1 && \Leftrightarrow && n-r^c\leq s\leq l^a,\label{eq:8}
\end{align}
and $0<p_i(s)<1$  for all other cases. 
Finally, if $S=\emptyset$, then $\lambda_{S,T}>0$ for every $T\in[N^a,N^a\cup
  N^m]$, and if $S=N$, then $\lambda_{S,T}>0$ for every $T\in [N^c,N^c\cup N^m]$. Table \ref{tab:2} presents possible transitions from $S\in 2^N$ in the mixed case.
\begin{table}[htb]
\begin{center}
\begin{tabular}{|c|c|c|c|}\hline
 & $0\leq s\leq l^c$ & $l^c<s<n-r^c$ &    $n-r^c\leq s \leq n$\\ \hline
$0\leq s\leq l^a$ & $T\in[N^a,N^a\cup N^m]$ & $T\in[N^a,N] $& $N$\\ \hline
$l^a<s<n-r^a$ & $T\in[\emptyset,N^a\cup N^m]$ & $T\in 2^N$& $T\in[N^c,N]$\\ \hline
$ n-r^a\leq s\leq n$ & $\emptyset$ & $T\in[\emptyset,N^c\cup N^m]$& $T\in [N^c,N^c\cup N^m]$\\ \hline
\end{tabular}
\vspace{2 mm}
\caption{Possible transitions from $S\in 2^N$ in the mixed case}
\label{tab:2}
\end{center}
\end{table}


\subsection{Relation with the anonymous model of conformity}\label{sec:anoinf}

We recall the anonymous model of conformism (\cite{foe-gra-rus13}) and show
that it is equivalent to our class of conformist aggregation rules. By doing
this we show that the present model is a natural extension of
\cite{foe-gra-rus13}.

The assumption that agents modify their opinions in a Markovian way is basically that underlying \cite{gra-rus13}. As the number
of states is $2^n$, the size of the transition matrix is $2^n\times 2^n$. In
order to avoid this exponential complexity, like in the present paper the latter reference uses a simple
mechanism to generate the transition matrix, based on aggregation
functions, that is, nondecreasing mappings $A:[0,1]^n\rightarrow[0,1]$
satisfying $A(0,\ldots,0)=0$ and $A(1,\ldots,1)=1$. Specifically, supposing that
agent $i$ aggregates opinions by the function $A_i$, the probability that agent
says `yes' at next time step, given that $S$ is the set of agents saying `yes'
at present, is given by 
\begin{equation}\label{eq:Agg}
p_i(S) = A_i(1_S),
\end{equation}
where $1_S$ is the indicator function of $S$, i.e., $1_S(i)=1$ iff $i\in S$ and
0 otherwise. This model is presented and studied in general in
\cite{gra-rus13}.

The most common example of aggregation function, used, e.g., in
\citet{deg74}, is the weighted arithmetic mean
\begin{equation}
A_i(x)=\sum_{j=1}^nw^i_jx_j,
\end{equation}
where $x=(x_1,\ldots,x_n)$ and the $w^i_j$'s are weights on the entries,
satisfying $w^i_j\geq 0$ and $\sum_{j=1}^nw^i_j=1$. The weight $w^i_j$ represents to
which extent agent $i$ puts confidence on the opinion $x_j$ of agent $j$. It depicts a
situation where every agent knows the identity of every other agent, and is able
to assess to which extent he trusts or agrees with the opinion or personal tastes
of others.

\cite{foe-gra-rus13} investigate the model of conformism with anonymous social influence, which depends only on the \textit{number} of
agents with a certain opinion, not on their identity. They use the \textit{ordered weighted averages} (commonly called OWA operators, \cite{yag88}), which are the unique anonymous aggregation functions:
\begin{equation}\label{eq:2}
\OWA_w(x)=\sum_{j=1}^nw_jx_{(j)},
\end{equation}
where the entries $x_1,\ldots,x_n$ are rearranged in decreasing order:
$x_{(1)}\geq x_{(2)}\geq\cdots\geq x_{(n)}$, and $w=(w_1,\ldots,w_n)$ is the
weight vector defined as above. Hence, the weight $w_j$ is not
assigned to agent $j$ but to rank $j$, and thus permits to model
quantifiers. For example, taking $w_1=1$ and all other weights being 0 models
the quantifier ``there exists''. Indeed, it is enough to have one of the entries
being equal to 1 to get 1 as output. In our context, it means that only one
agent saying `yes' is enough to make your opinion being 'yes' for
sure. Similarly, ``for all'' is modeled by $w_n=1$ and all other weights being
0, and means that you need that all agents (including you) say `yes' to be sure
to continue to say `yes'. Intermediate situations can be modeled as
well. 

\medskip 

For the anonymous model of conformism, there exist two types of absorbing classes (\cite{foe-gra-rus13}):
\begin{enumerate}
\item any single state $S\in 2^N$ (including the consensus states $\emptyset$ and
$N$);
\item union of intervals of the type $[S,S\cup K]$, where $S,K\neq\emptyset,N$
  (recall that $[S,S\cup K]=\{T\in 2^N\mid S\subseteq T\subseteq S\cup K\}$). 
\end{enumerate}  
For the second case, when the absorbing class is reduced to a single interval
$[S,S\cup K]$, it depicts a situation in the long run where agents in $S$ say
`yes', those outside $S\cup K$ say `no', and those in $K$ oscillate between
`yes' and `no' forever. Interestingly, no periodic class can occur, although
in general for arbitrary aggregation functions cycles can occur
(\cite{gra-rus13}).  

\bigskip

We now establish the relation with our framework. Supposing that agent $i$
aggregates opinions anonymously by $\OWA_{w^i}$ with weight vector $w^i$, we have
from (\ref{eq:Agg}) and (\ref{eq:2}), for any $S\subseteq N$:
\[
p_i(S) = \OWA_{w^i}(1_S) = \sum_{j=1}^sw^i_j =:p_i(s).
\] 
We see that $p_i(s)$ is an aggregation rule in the sense of Section~\ref{subsec:description}, and
that, due to the properties of $w^i$, $p_i\in A^c$. Conversely, for any $p\in
A^c$, we can define $w\in \RR^n$ by $w_j=p(j)-p(j-1)$, $j=1,\ldots,n$, and by
properties of $p\in A^c$, $w$ is a well-defined weight vector s.t. $p\equiv
\OWA_w$. The introduction of the classes of aggregation rules $A^a,A^m$ are
therefore natural generalizations of the conformist model of \citet{foe-gra-rus13}.

\bigskip

Note that the numbers $l^c,r^c,l^a,r^a$ defined in (\ref{eq:4new}) - (\ref{eq:7new}) correspond to the numbers of left and right zeroes in the weight vectors of an OWA operator, for conformists and anti-conformists, respectively. Moreover, the assumptions (\ref{eq:4new}) - (\ref{eq:7new}) simply mean that while agents in $N^c$ ($N^a$, respectively) may have different weight
vectors, the number of left and right zeroes is the same for all of them. 
The number of left/right zeroes indicates how many people the agent needs in order to start being influenced towards the yes/no opinion. In particular, a non symmetrical weight vector w.r.t. the number of left and right zeroes means that the agent is biased towards the `yes' or `no' answer, i.e., he needs a different number of people to start being convinced to say `yes' or `no'.

\section{Convergence of the model}\label{sec:convergence}

This section is devoted to the study of absorbing classes in the model introduced in Section \ref{subsec:description}. Unlike the case of a
model with only conformist agents, their study appears to be extremely complex. 

We start by the simple cases where there is no anti-conformist or no conformist
agents. Then we establish the main result (Theorem~\ref{th:main}) giving all
possible absorbing classes in the pure case ($N^m=\emptyset$), and continue with
the mixed case, yielding similar results (Theorem~\ref{th:main2}). These results
are valid without any restriction on the number of agents $n$, nor on any
parameter describing the society. They give an exact description of all possible
absorbing classes (there are 20 classes), with their conditions of
existence. In order to make the results more readable, we then consider the
number of agents tending to infinity and propose some rewriting of the parameters
describing the society. Doing so, it happens that 4 absorbing classes among the
20 ``disappear'', because they exist only as limit cases. Based on that, we
provide a clear analysis of the convergence in three typical types of society.    

Throughout, we will use the following notation: we write
$S\rightarrow T$ if a transition from $S$ to $T$ is possible, i.e.,
$\lambda_{S,T}>0$, and 
$S\srightarrow T$ if $\lambda_{S,T}=1$ (sure transition). We extend the latter
notation to collections of sets: letting $\cS,\cT$ be two nonempty collections
of sets in $2^N$, we write
\[
\cS\srightarrow\cT\quad \Leftrightarrow \quad \forall T\in \cT,\exists S\in\cS
\text{ s.t. }\lambda_{S,T}>0\text{ and } \forall S\in \cS,\forall T\not\in\cT, \lambda_{S,T}=0.
\]

\subsection{Cases with no anti-conformists ($N^a=N^m=\emptyset$) or no conformists ($N^c=N^m=\emptyset$)}\label{sec:noanti}
We mentioned in Section~\ref{sec:anoinf} that for the anonymous model of conformism,
it was found that any state could be absorbing and that other absorbing classes
are intervals of sets. However, the case with $N^a=\emptyset$ is much more particular as every
agent has the same $l,r$, and there are very few possibilities left, as shown
below. To this end, it suffices to rewrite Table~\ref{tab:1} which becomes:
\begin{center}
\begin{tabular}{|c|c|c|}\hline
$0\leq s\leq l^c$ & $l^c<s<n-r^c$ & $n-r^c\leq s\leq n$\\ \hline
$\emptyset$ & $2^N$ & $N$\\ \hline
\end{tabular}
\end{center}
It follows that only $\emptyset,N$ are absorbing states and neither $2^N$ nor
any of its subcollections is an absorbing class, because a transition from any
$S\neq\emptyset,N$ to $\emptyset$ or $N$ is possible.

Both $\emptyset,N$ are possible, regardless of the values of $r^c,l^c$. This
means that the society converges to a consensus, which depends on the initial
state $S$. If $s\leq l^c$, there is convergence to 'no' in one step, and if
$s\geq n-r^c$, there is convergence to 'yes' in one step.


\medskip

The analysis of the extreme case with $N^c=\emptyset$ is also done by rewriting Table~\ref{tab:1}
which becomes now:
\begin{center}
\begin{tabular}{|c|c|c|}\hline
$0\leq s\leq l^a$ & $l^a<s<n-r^a$ & $n-r^a\leq s\leq n$\\ \hline
$N$ & $2^N$ & $\emptyset$\\ \hline
\end{tabular}
\end{center}
Then there is only one absorbing class which is the cycle $\emptyset\srightarrow
N\srightarrow \emptyset$. We see that, without much surprise, a society of
anti-conformist agents can never reach a stable state. 

\subsection{The pure case ($N^m=\emptyset$)}\label{subsec:convergence-pure}

We observe the following basic facts:
\begin{itemize}
\item[(F0)] $\emptyset\srightarrow N^a$, $N\srightarrow N^c$ (as already observed).
\item[(F1)] If $\cS\srightarrow \cT$, $\cS'\srightarrow \cT'$ and $\cS\subset \cS'$,
  then $\cT\subseteq \cT'$.
\item[(F2)]  Applying (F0) and (F1), we find that in a transition $\cS\rightarrow\cT$, $\emptyset\in\cS$ implies
  $N^a\in\cT$ and $N\in\cS$ implies $N^c\in \cT$.
\item[(F3)] Consider $\cS\srightarrow\cT_1\srightarrow\cdots\srightarrow\cT_p$,
  with $p\geq 2$. If $\cS\subseteq \cT_1$, then
  $\cS\subseteq\cT_1\subseteq\cdots\subseteq\cT_p$.
\item[(F4)] $2^N$ is a possible absorbing class. Indeed, take
  $l^c=r^c=l^a=r^a=0$. From Table~\ref{tab:1} we immediately see that for any
  $S\neq\emptyset,N$ we have $S\srightarrow 2^N$. Since the power set of the set
  of states is the ``default'' absorbing class when no other can exist, we
  exclude it from our study and do not consider transitions to $2^N$.
\item[(F5)] From Table~\ref{tab:1}, we see that we have to deal only with the sets
  $\emptyset, N^a,N^c,N$ and the intervals $[\emptyset,N^c], [\emptyset,N^a],
  [N^a,N],[N^c,N]$ ($2^N$ being excluded by (F4)), i.e., only these can be
  constituents of an absorbing class. We put
\[
\BB=\{\{\emptyset\},  \{N^a\},\{N^c\},\{N\},[\emptyset,N^c], [\emptyset,N^a],
                   [N^a,N],[N^c,N]\}
\]
 the set of collections relevant to our study. Intervals not reduced to a
 singleton are called \textit{nontrivial intervals}.
\item[(F6)] $\cS\subseteq 2^N$ is an absorbing class if and only if
  $\cS\srightarrow \cS$ and $\cS$ is \textit{connected} (i.e., there is a path
  (chain of transitions) from $S$ to $T$ for any $S,T\in\cS$).
\end{itemize}
(F6) will be our unique tool to find aperiodic absorbing classes, while periodic
classes are of the form $\cS_1\srightarrow\cdots\srightarrow\cS_p$,
with $\cS_1,\ldots,\cS_p\subset 2^N$ and being pairwise disjoint (no common set
between $\cS_i,\cS_j$), and $\cS_1\cup\cdots\cup\cS_p$ must be connected.

Since $N^m=\emptyset$, we have $n^a=n-n^c$, where $n^a = |N^a|$ and $n^c = |N^c|$. 
Hence, the model is entirely
determined by $l^c,r^c,l^a,r^a,n^c,n$. We recall that these parameters must
satisfy the following constraints:
\begin{align*}
0&\leq l^a+r^a  <n\\
0&\leq l^c+r^c  <n\\
0&<n^c<n.
\end{align*} 

Based on these facts, we can show the main result of this section. 
\begin{Theorem}\label{th:main}
Assume that $N^m =\emptyset $, $N^a \neq \emptyset $ and $N^c \neq
\emptyset$. There are twenty possible absorbing classes which are\footnote{We use the standard notation $\vee$ and $\wedge$ to denote $\max$ and $\min$, respectively.}:
\begin{enumerate}
\item Either one of the following singletons:
  \begin{enumerate}
  \item[(1)] $N^a $ if and only if $n^c\geq (n-l^c)\vee(n-l^a)$;
  \item[(2)]  $N^c $ if and only if $n^c\geq (n-r^c)\vee(n-r^a)$;
  \end{enumerate}
\item or one of the following cycles and periodic classes: 
  \begin{itemize}
  \item[(3)]  $ N^a  \xrightarrow{1} \emptyset \srightarrow N^a $ if and only if
    $n-l^c\leq n^c\leq r^a$;
  \item[(4)]  $N^c \xrightarrow{1} N  \xrightarrow{1} N^c $ if and only if $n-r^c\leq
    n^c\leq l^a$;
  \item[(5)]  $N^a \xrightarrow{1} N^c \xrightarrow{1} N^a $ if and only if $n^c\leq
    l^c\wedge l^a\wedge r^c\wedge r^a$;
  \item[(6)]  $ \emptyset \xrightarrow{1} N^a \xrightarrow{1} N^c \xrightarrow{1}
    \emptyset $ if and only if $n^c\leq r^c\wedge r^a\wedge l^c$ and $n^c\geq
    n-r^a$;
  \item[(7)]  $N^a \xrightarrow{1} N \xrightarrow{1} N^c \xrightarrow{1} N^a $ if and
    only if $n^c\leq l^c\wedge l^a\wedge r^c$ and $n^c\geq n-l^a$;
  \item[(8)]  $ N^a \xrightarrow{1} [\emptyset, N^c] \xrightarrow{1} N^a $ if and only
    if $n^c\leq l^c\wedge l^a\wedge r^a$ and  $r^c<n^c<n-l^c$;
  \item[(9)]  $N^c \xrightarrow{1} [N^a, N] \xrightarrow{1} N^c $ if and only if $n^c\leq r^c\wedge r^a\wedge l^a$ and  $l^c<n^c<n-r^c$;
  \item[(10)]  $[\emptyset, N^c] \xrightarrow{1} [N^a,N] \xrightarrow{1} [\emptyset,
    N^c] $ if and only if $r^c\vee l^c<n^c\leq r^a\wedge l^a\wedge
    (n-l^c-1)\wedge (n-r^c-1)$;
  \end{itemize}			
\item or one of the following intervals or union of intervals:
  \begin{itemize}
  \item[(11)]  $[\emptyset, N^a]$ if and only if $(n-l^c)\vee(r^a+1)\leq n^c<n-l^a$;
  \item[(12)]  $[N^c, N] $ if and only if $(n-r^c)\vee(l^a+1)\leq n^c<n-r^a$;
  \item[(13)]  $ [\emptyset, N^a] \cup [\emptyset, N^c]  $ if and only if
      $l^c\geq n-r^a$ and $n^c\in\big(\left]r^c,n-l^c\right[\cap
    \left]l^a,n-r^c\right[\big)\cup\big(\big(\left]l^a,n-r^a\right[\cup\left]l^c,n-r^c\right[\big)\cap
    \left]0,r^c\right]\big)$;
  \item[(14)]  $ [N^a, N] \cup [N^c,N] $ if and only if $l^a \geq n-r^c $ and $ n^c \in \big(\left]l^c,n-r^c\right[\cap
  	\left]r^a,n-l^c\right[\big)\cup\big(\big(\left]r^a,n-l^a\right[\cup\left]r^c,n-l^c\right[\big)\cap
  	\left]0,l^c\right]\big)  $; 
\item[(15)] $[\emptyset,N^a]\cup\{N^c\}$ if and only if $l^c+r^c=n-1$, $r^a\geq
  r^c$,  $l^c>l^a$ and
  $l^a<n^c< (n-r^a)\wedge(n-l^c)$;
\item[(16)]  $[N^c,N]\cup\{N^a\}$ if and only if $l^c+r^c=n-1$, $l^a\geq l^c$
   $r^c>r^a$ and
    $r^a<n^c< (n-l^a)\wedge(n-r^c)$;
  \item[(17)] $[\emptyset,N^c]\cup\{N^a\}$ if and only if $l^a+r^a=n-1$, $l^c\geq l^a$,
  $n^c<n-r^c$ and $n^c\in\left]r^c,n-l^c\right[\cup\left]l^c,r^c\right]$;
\item[(18)] $[N^a,N]\cup\{N^c\}$ if and only if $l^a+r^a=n-1$, $r^c\geq r^a$,
  $n^c<n-l^c$ and $n^c\in\left]l^c,n-r^c\right[\cup\left]r^c,l^c\right]$.
  \item[(19)]  $[\emptyset,N^a]\cup[N^c,N]$ if and only if $l^c+r^c=n-1$
    and $l^a\vee r^a<n^c\leq l^c\wedge r^c$;
  \item[(20)] $2^N$ otherwise.
  \end{itemize}  
\end{enumerate}
Moreover, if $l^c+r^c=n-1$, then cases (6), (7), (8), (13) and (14) become
impossible while (15) and (16) are specific to this case, and if $l^a+r^a=n-1$,
then cases (11) and (12) become impossible, while (17) and (18) are specific to
this case.
\end{Theorem}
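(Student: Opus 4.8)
The plan is to reduce the statement to a finite enumeration by means of the structural facts (F0)--(F6). By (F5) every absorbing class is a union of members of the eight-element family $\BB$, and by (F6) and the remark following it such a class is either \emph{aperiodic} --- a single connected collection $\cS$, itself a union of members of $\BB$, with $\cS\srightarrow\cS$ --- or \emph{periodic} --- a chain $\cS_1\srightarrow\cdots\srightarrow\cS_p\srightarrow\cS_1$ whose blocks are pairwise disjoint unions of members of $\BB$ with connected union. Everything then hinges on one computation: for each $\cB\in\BB$, and each union that can arise, determine the unique collection $\cT$ with $\cB\srightarrow\cT$. This is read off Table~\ref{tab:1}: for a single state $S$ one only has to place $|S|$ among the thresholds $l^c,l^a,n-r^c,n-r^a$ (the cardinalities that occur being $0,n,n^c$ and $n^a=n-n^c$); for a nontrivial interval $[S,T]$ one does this for every cardinality in $\{|S|,\dots,|T|\}$ and unions the resulting table entries. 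The rigidity I would lean on is that, for the image of such an interval to stay inside $\BB$ --- equivalently, to avoid the central box $2^N$ --- all these cardinalities must fall in one common column, or one common row, of Table~\ref{tab:1}; this is what keeps the number of candidates small. Throughout, the interchange and reversal symmetries, and in particular the symmetry principle (Lemma~\ref{lem:sym}), pair the conformist with the anti-conformist constituents and the ``bottom'' intervals $[\emptyset,N^a],[\emptyset,N^c]$ with the ``top'' intervals $[N^a,N],[N^c,N]$, so that roughly half of the twenty cases follow from the other half by a mechanical relabelling of parameters --- concretely the orbits are $\{(1),(2)\}$, $\{(3),(4)\}$, $\{(5)\}$, $\{(6),(7)\}$, $\{(8),(9)\}$, $\{(10)\}$, $\{(11),(12)\}$, $\{(13),(14)\}$, $\{(15),(16),(17),(18)\}$, $\{(19)\}$.

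\textbf{Singletons and periodic classes.} First I would dispose of these. The singletons $\{\emptyset\}$ and $\{N\}$ are never absorbing, because (F0)/(F2) force $N^a$ into the image of $\emptyset$ and $N^c$ into the image of $N$, while $N^a\neq\emptyset$ and $N^c\neq N$ under the hypotheses; the singleton $\{N^a\}$ is absorbing exactly when the box of Table~\ref{tab:1} containing $n^a$ has entry $\{N^a\}$, i.e.\ $n-n^c\le l^c\wedge l^a$, which is (1), and (2) is its interchange. For the periodic classes I would use that (F0) forces the arcs $\emptyset\srightarrow N^a$ and $N\srightarrow N^c$ and that every block lies in $\BB$; iterating the successor map starting from each of $\{\emptyset\},\{N^a\},\{N^c\},\{N\}$, and from one nontrivial interval, yields precisely the cycles (3)--(9) and the interval cycle $[\emptyset,N^c]\srightarrow[N^a,N]\srightarrow[\emptyset,N^c]$ of (10). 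In each of these, closedness and connectivity come for free once every link is a sure transition, so the ``if and only if'' amounts to requiring each set of the chain to sit in the box of Table~\ref{tab:1} whose entry is exactly its successor, and writing this out produces the stated inequalities.

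\textbf{Aperiodic non-singleton classes.} These I expect to be the main obstacle. Here $\cS\srightarrow\cS$ with $\cS$ a connected, non-trivial union of members of $\BB$. The column/row rigidity forces a single nontrivial interval to be $[\emptyset,N^a]$ or $[N^c,N]$ --- giving (11) and (12), since $\emptyset\srightarrow N^a$ together with the fact that $N^a,N^c$ are nonempty and disjoint rules out $[\emptyset,N^c]$, and dually --- and the only unions of two members of $\BB$ that can be simultaneously connected and satisfy $\cS\srightarrow\cS$ are $[\emptyset,N^a]\cup[\emptyset,N^c]$, $[N^a,N]\cup[N^c,N]$, $[\emptyset,N^a]\cup\{N^c\}$, $[N^c,N]\cup\{N^a\}$, $[\emptyset,N^c]\cup\{N^a\}$, $[N^a,N]\cup\{N^c\}$ and $[\emptyset,N^a]\cup[N^c,N]$, which are (13)--(19). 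For each one I would expand $\cS\srightarrow\cS$ box by box --- both the ``every target is reached'' half and the ``nothing escapes'' half --- and collect the constraints; this is where the real work lies, because tracking, for \emph{every} cardinality $s$ inside the interval, which of the finitely many table entries it produces forces the answer into the disjunction-of-intersections-of-integer-intervals form seen in, e.g., (13), and because connectivity has to be checked separately --- its failure is exactly what distinguishes a genuine absorbing union from two disjoint smaller absorbing classes. The presence of a lone opposite corner $\{N^c\}$ (resp.\ $\{N^a\}$) inside the class is what forces the equalities $l^c+r^c=n-1$ (resp.\ $l^a+r^a=n-1$) in (15)--(19).

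\textbf{Exhaustiveness and the closing remarks.} For exhaustiveness one runs systematically through all unions of members of $\BB$: (F2) puts $N^a$ into any class containing $\emptyset$ and $N^c$ into any class containing $N$, which together with the column/row rigidity kills every combination outside (1)--(19), and when none of these applies no nontrivial absorbing class survives, so $2^N$ is absorbing by (F4), which is (20). The incompatibilities in the last sentence of the statement then follow by substituting $l^c+r^c=n-1$ (respectively $l^a+r^a=n-1$) into the inequalities of (1)--(19) and checking which become unsatisfiable and which become redundant with their surrounding hypotheses; structurally, $l^c+r^c=n-1$ empties the central column of Table~\ref{tab:1}, so that the entries $[N^a,N]$ and $[\emptyset,N^c]$ --- hence any proper element of those two intervals other than $N^a,N,N^c$ --- can no longer be produced by a sure transition, and symmetrically $l^a+r^a=n-1$ empties the central row.
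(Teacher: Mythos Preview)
Your overall plan matches the paper's strategy closely --- reduce to the family $\BB$ via (F5), analyse sure transitions cell by cell using Table~\ref{tab:1}, exploit the reversal symmetry of Lemma~\ref{lem:sym}, and finish with a case enumeration. That is fine. But there is a genuine gap in your treatment of connectedness.

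You write that for the periodic classes ``closedness and connectivity come for free once every link is a sure transition.'' This is false for case~(10). Having $[\emptyset,N^c]\srightarrow[N^a,N]$ and $[N^a,N]\srightarrow[\emptyset,N^c]$ only tells you that the union is closed and that each block reaches the other; it does \emph{not} guarantee that from any state in $[\emptyset,N^c]$ you can eventually reach any other state in $[\emptyset,N^c]$. Concretely, if $n^a\ge n-r^a$, then $N^a\srightarrow\emptyset$ by Table~\ref{tab:1}, and you get the trapped $2$-cycle $N^a\srightarrow\emptyset\srightarrow N^a$ inside the union, which breaks connectedness. The paper checks this and finds that connectedness forces $N^a\srightarrow[\emptyset,N^c]$ and $N^c\srightarrow[N^a,N]$, yielding the extra upper bound $n^c\le(n-l^c-1)\wedge(n-r^c-1)$ in~(10). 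Your procedure of ``requiring each set of the chain to sit in the box of Table~\ref{tab:1} whose entry is exactly its successor'' would only produce $r^c\vee l^c<n^c\le r^a\wedge l^a$ and would miss this.

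The same oversight hits the single intervals~(11) and~(12), which you subsume under ``column/row rigidity'' without a separate connectivity check. For $[\emptyset,N^a]$, closedness gives $n-l^c\le n^c<n-l^a$, but if $n^a\ge n-r^a$ then every $S$ with $s\le l^a$ has $S\srightarrow N^a$ while $N^a\srightarrow\emptyset$, so the proper subsets of $N^a$ are never reached and the interval is not connected. This is where the lower bound $n^c\ge r^a+1$ in~(11) (and symmetrically $l^a+1$ in~(12)) comes from. Two smaller points: you should also argue explicitly that no periodic block can be a nontrivial union of members of $\BB$ (the paper dispatches this in one line using pairwise disjointness and the fact that a singleton source cannot lead to a multiple target), and that periods $\ge4$ are impossible (the paper proves this as a separate lemma).
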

The proof of Theorem \ref{th:main} can be found in the appendix.

\medskip

A complete analysis and interpretation of these results seems to be out of
reach. Nevertheless, this becomes quite possible and instructive when
considering special situations (see below). We recall that these results are
given without any simplifying assumption and are valid whatever the value of the
parameters describing the society. Once they are known, Theorem~\ref{th:main}
immediately gives the possible absorbing classes. We stick in what follows to
general comments, reserving more specific ones and a deeper analysis to the
particular situations studied hereafter:
\begin{enumerate}
\item Although there are many absorbing classes, they can be grouped in three
  categories. The first category ((1) and (2)) comprises absorbing states
  (singleton classes) and represent a stable state of the society in the long
  run, which happens to be a polarization: a part of the society says 'yes' (the
  anti-conformists in case (1), the conformists in case (2)), while the other
  part says 'no'. Observe that consensus never occurs.

The second category (from (11) to (20)) comprises aperiodic classes which are
not reduced to singletons. (11) and (12) are simple cases where the class is an
interval. For (11), it means that in the long run all the conformists say `no',
while the anti-conformists have a chaotic behavior, in the sense that, at each
time step, a (different) fraction of them says 'yes' while the remaining part
says 'no'. We may call this a {\it fuzzy polarization}. Cases (13) through (18)
can be interpreted in a similar way, although the behavior is yet more
chaotic: taking (13) for example, in the long run, at each time step the set of
agents saying 'yes' can be any fraction of anti-conformists or any fraction of
conformists (but not a mixed group). We may call this a {\it chaotic
  polarization}. The extreme case of chaotic behavior is given by (20): at any
time step in the long run, the set of agents saying 'yes' can be any subset of
agents.  

The third category (from (3) to (10)) comprises cycles and periodic classes,
which also have a natural interpretation. Cycles (from (3) to (7)) are
succession in time of states of polarization and/or consensus, endlessly in the
same order. Periodic classes ((8) to (10)) mix the cyclic behavior with
intervals as in the case of fuzzy polarization, and for this reason may be
called {\it fuzzy cycles}. For example, in case (8) at some
step in the long run all anti-conformists say `yes', in the following step they
all say `no' but a fraction of conformists says `yes', then in the next
step again all anti-conformists say `yes', etc.
\item Cases (1) to (20) are not exclusive. This can immediately be seen by
  considering cases (1) and (2). Indeed, for a given society, which is
  represented by the set of parameters $n$, $n^c$, $l^a$, $r^a$, $l^c$ and $r^c
  $, under some conditions both cases (1) and (2) are possible, and therefore
  two different absorbing classes might occur, $N^a$ and $N^c$. However, the
  process will end up in only one of them, with some probability.
\item The analysis for conformists and anti-conformists is not symmetric. For
  example, $[\emptyset,N^a]$ is a possible absorbing class but not
  $[\emptyset,N^c]$. However, while there is no symmetry between ``$a$'' and
  ``$c$'' in this framework, there exists symmetry between $S$ and $N\setminus
  S$ as pointed out in Lemma \ref{lem:sym}.
\end{enumerate}

\subsection{The case of only mixed agents ($N^m=N$)}
Prior to the study of the mixed case, we consider the situation where the
society is formed only by mixed agents: $N=N^m$. It is easily done by using
Eqs. (\ref{eq:7}) and (\ref{eq:8}):
\begin{align*}
p_i(s) = 0 & \Leftrightarrow n-r^a\leq s\leq l^c\\
p_i(s) = 1 & \Leftrightarrow n-r^c\leq s\leq l^a
\end{align*}
which permits to rewrite Table~\ref{tab:1}:
\begin{center}
\begin{tabular}{|c|c|c|c|}\hline
 & $0\leq s<n-r^c$ & $n-r^c\leq s\leq l^a$ &    $l^a< s\leq n$\\ \hline
$0\leq s< n-r^a$ & $2^N$ & $N $& $2^N$\\ \hline
$n-r^a\leq s\leq l^c$ & $\emptyset$ & does not occur & $\emptyset$\\ \hline
$l^c<s\leq n$ & $2^N$ & $N$& $2^N$\\ \hline
\end{tabular}
\end{center}
We can see that $\emptyset,N$ are not absorbing states, hence the only absorbing
class is $2^N$.

\subsection{The (general) mixed case ($N^m \neq \emptyset$)}\label{subsec:convergence-mixed}
Next, we consider a society in which pure conformists, pure anti-conformists, and mixed agents co-exist, i.e., $n=n^a + n^c + n^m$, with $n^a >0$, $n^c > 0$ and $n^m > 0$. We get the following result.

\begin{Theorem}\label{th:main2}
Assume that $N^m \neq \emptyset $, $N^a \neq \emptyset $ and $N^c \neq
\emptyset$. Let $\overline{N}^{a}= N^a\cup N^m$ and $\overline{N}^{c}= N^c\cup
N^m$. There are twenty possible absorbing classes which are:
\begin{enumerate}
\item Either one of the following intervals:
  \begin{enumerate}
	\item[(1)] $[N^a,\overline{N}^{a}] $ if and only if $n^c\geq (n-l^c)\vee(n-l^a)$;	
	\item[(2)] $[N^c, \overline{N}^{c}] $ if and only if $n^c\geq (n-r^c)\vee(n-r^a)$;
	\item[(3)] $[\emptyset, \overline{N}^{a}]$ if and only if $(n-l^c)\vee(r^a+1)\leq n^c<n-l^a$;
	\item[(4)] $[N^c, N] $ if and only if $(n-r^c)\vee(l^a+1)\leq n^c<n-r^a$;
  \end{enumerate}
\item or one of the following periodic classes:
\begin{enumerate}
	\item[(5)] $ [N^a,\overline{N}^{a}] \xrightarrow{1} \emptyset \srightarrow [N^a,\overline{N}^{a}] $ if and only $n-l^c\leq n^c\leq r^a-n^m$;
	\item[(6)] $[N^c, \overline{N}^{c}] \xrightarrow{1} N \xrightarrow{1} [N^c,\overline{N}^{c} ] $ if and only if $n-r^c\leq n^c\leq l^a-n^m$;
	\item[(7)] $[N^a,\overline{N}^{a}] \xrightarrow{1} [N^c, \overline{N}^{c}] \xrightarrow{1} [N^a,\overline{N}^{a}] $ if and only if $n^c+n^m\leq l^c\wedge l^a\wedge r^c\wedge r^a$;
	\item[(8)] $ \emptyset \xrightarrow{1} [N^a,\overline{N}^{a}] \xrightarrow{1} [N^c,\overline{N}^{c}] \xrightarrow{1} \emptyset $ if and only if $n^c+n^m\leq r^c\wedge r^a\wedge l^c$ and $n^c\geq n-r^a$;
	\item[(9)] $[N^a,\overline{N}^{a}] \xrightarrow{1} N\xrightarrow{1} [N^c,\overline{N}^{c}] \xrightarrow{1} [N^a ,\overline{N}^{a}] $ if and only if $n^c+n^m\leq l^c\wedge l^a\wedge r^c $ and $n^c\geq n-l^a$;
	\item[(10)] $ [N^a,\overline{N}^{a}] \xrightarrow{1} [\emptyset, \overline{N}^{c}] \xrightarrow{1} [N^a,\overline{N}^{a}] $ if and only if $n^c+n^m\leq l^c \wedge l^a\wedge r^a$ and $r^c-n^m<n^c<n-l^c$;
	\item[(11)] $[N^c,\overline{N}^{c}] \xrightarrow{1} [N^a, N] \xrightarrow{1} [N^c,\overline{N}^{c}] $ if and only if $n^c+n^m\leq r^c\wedge r^a\wedge l^a$ and $l^c-n^m<n^c<n-r^c$;
	\item[(12)] $[\emptyset, \overline{N}^{c}] \xrightarrow{1} [N^a,N] \xrightarrow{1} [\emptyset,\overline{N}^{c}] $ if and only if $ l^c \vee r^c <n^c+n^m \leq r^a \wedge l^a \wedge (n-(l^c+1))\wedge (n-(r^c+1)) $;
\end{enumerate}
\item or one of the following unions of intervals:
		\begin{enumerate}		
	\item[(13)] $[\emptyset,\overline{N}^{a}]\cup [\emptyset,\overline{N}^{c}] $ if and only if $l^c\geq n-r^a $ and 	
	$$n^c\in\big(\left]r^c-n^m,n-l^c\right[\cap \left]l^a,n-r^c\right[\big)\cup \big(\big(\left]l^a-n^m,n-r^a\right[\cup\left]l^c-n^m,n-r^c\right[\big) \cap \left]0,r^c-n^m\right]\big);$$
	\item[(14)] $[N^a,N]\cup [N^c,N] $ if and only if $r^c \geq n-l^a $ and 
	$$n^c\in\big(\left]l^c-n^m,n-r^c\right[\cap \left]r^a,n-l^c\right[\big)\cup \big(\big(\left]r^a-n^m,n-l^a\right[\cup\left]r^c-n^m,n-l^c\right[\big)\cap \left]0,l^c-n^m\right]\big);$$
	\item[(15)]  $[\emptyset,\overline{N}^{a}] \cup [N^c,\overline{N}^{c}] $
          if and only if $l^c+r^c=n-1$, $r^c\leq r^a$, $l^c>l^a $, $n^c < n-l^c $ and $ l^a <n^c+n^m<n-r^a $;
	\item[(16)]  $[N^c,N] \cup [N^a,\overline{N}^{a}] $ if and only if $l^c+r^c=n-1$, $l^c\leq l^a$, $r^a<r^c $, $n^c<n-r^c $ and $ r^a <n^c+n^m<n-l^a $;
	\item[(17)]  $ [\emptyset, \overline{N}^{c}] \cup [N^a,\overline{N}^{a}] $ if and only if $l^a+r^a= n-1$, $l^c\geq l^a$, $ n^c<n-r^c $ and $ n^c \in \left] r^c-n^m,n-l^c\right[  \cup \left] l^c-n^m,r^c-n^m \right[ $;
	\item[(18)]  $[N^a,N] \cup [N^c,\overline{N}^{c}] $ if and only if
          $l^a+r^a= n-1$, $r^c\geq r^a$, $ n^c<n-l^c $ and $ n^c \in \left]
          l^c-n^m,n-r^c\right[  \cup \left] r^c-n^m,l^c-n^m \right[ $;
          \item [(19)]  $[\emptyset, \overline{N}^a] \cup [N^c,N] $ if and only if  $ l^c+r^c=n-1 $ and $ l^a \vee r^a <n^c \leq l^c \wedge r^c $;
\end{enumerate}
\item[(20)] $2^N $ otherwise.
\end{enumerate}
\end{Theorem}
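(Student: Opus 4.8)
The plan is to reproduce, step by step, the argument that establishes Theorem~\ref{th:main}, now reading off transitions from Table~\ref{tab:2} instead of Table~\ref{tab:1}. The first step is to set up the mixed analogues of facts (F0)--(F6). Inspecting Table~\ref{tab:2}, every nonzero transition lands in one of the eight collections
\[
\BB^m=\big\{\{\emptyset\},\ \{N\},\ [N^a,\overline{N}^a],\ [N^c,\overline{N}^c],\ [\emptyset,\overline{N}^a],\ [\emptyset,\overline{N}^c],\ [N^a,N],\ [N^c,N]\big\}
\]
or in $2^N$; hence, exactly as in (F5), an absorbing class different from $2^N$ is necessarily a union of members of $\BB^m$. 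The sure transitions $\emptyset\srightarrow[N^a,\overline{N}^a]$ and $N\srightarrow[N^c,\overline{N}^c]$ replace (F0), and the monotonicity of the map ``$\cS\mapsto\cT$ such that $\cS\srightarrow\cT$'' (the analogue of (F1)--(F3)) goes through unchanged. I keep (F6) as the sole tool for aperiodic classes: $\cS$ is absorbing iff $\cS\srightarrow\cS$ and $\cS$ is connected; periodic classes are chains $\cS_1\srightarrow\cdots\srightarrow\cS_p$ of pairwise disjoint members of $\BB^m$ with connected union. I would also record that the symmetry principle of Lemma~\ref{lem:sym} survives verbatim in the mixed case --- one checks equivalences (i)--(iii) with ``$p_i(s)=0$'', ``$p_i(s)=1$'' in place of ``$p_i(s)>0$'', ``$p_i(s)<1$'' for $i\in N^m$, using \refe{eq:7}--\refe{eq:8} --- so the twenty cases split into symmetric pairs and only about half of them need an independent computation.

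The second step is the combinatorial heart. The constituents of $\BB^m$ are obtained from those of $\BB$ by the substitutions $\{N^a\}\mapsto[N^a,\overline{N}^a]$, $\{N^c\}\mapsto[N^c,\overline{N}^c]$, $[\emptyset,N^a]\mapsto[\emptyset,\overline{N}^a]$, $[\emptyset,N^c]\mapsto[\emptyset,\overline{N}^c]$, the four remaining ones being unchanged; this dictates that the list of twenty candidate classes is the same, mutatis mutandis, as in Theorem~\ref{th:main}. For each candidate I compute the precise condition under which the required sure transitions and the connectivity hold. The one new ingredient compared with the pure case is bookkeeping of \emph{sizes}: a ``fat'' constituent such as $[N^a,\overline{N}^a]$ contains sets of every cardinality from $n^a=n-n^c-n^m$ up to $n-n^c$, so the requirement ``$s$ lies in the Table~\ref{tab:2} cell producing $\cT$'' has to be imposed over a whole interval of lengths widened by $n^m$. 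For instance $[N^a,\overline{N}^a]\srightarrow\emptyset$ forces $n-r^a\le n^a$, i.e. $n^c\le r^a-n^m$ (compare case~(3) of Theorem~\ref{th:main}, where the analogous bound is $n^c\le r^a$), giving case~(5); and $[N^a,\overline{N}^a]\srightarrow[N^c,\overline{N}^c]$ together with $[N^c,\overline{N}^c]\srightarrow[N^a,\overline{N}^a]$ forces $n^c+n^m\le l^c\wedge l^a\wedge r^c\wedge r^a$, giving case~(7). Carrying this out for all twenty candidates --- and systematically feeding in the standing constraints $l^c+r^c<n$, $l^a+r^a<n$, $0<n^a,n^c,n^m$, which are precisely what rule out the spurious configurations --- produces the stated if-and-only-if conditions: $n^m$-shifted bounds wherever a fat constituent is involved, and the pure-case bounds otherwise.

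The third step is exhaustiveness together with the degenerate regimes. Exhaustiveness follows from the $\BB^m$ observation plus (F6): if a union of members of $\BB^m$ satisfies none of conditions (1)--(19), then from one of its states the only available transitions are into the ``$2^N$'' cell of Table~\ref{tab:2}, no proper subcollection can be closed and connected, so the only remaining absorbing class is $2^N$ --- case~(20). Finally, the boundary cases $l^c+r^c=n-1$ and $l^a+r^a=n-1$, in which some cells of Table~\ref{tab:2} are empty (the ``does not occur'' phenomenon already seen for $N=N^m$), must be examined separately, exactly as at the end of Theorem~\ref{th:main}: this is what creates the classes (15)--(19) and forces some others to degenerate.

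The main obstacle is the sheer volume of the case analysis in step two, and in particular keeping the $n^m$-shifts consistent: for each candidate, each internal sure transition must be checked over the full size-range of \emph{every} fat constituent, and the connectivity direction additionally requires that the ``rich'' cells ($2^N$, $[\emptyset,\overline{N}^c]$, $[N^a,N]$, \dots) really are reached from every relevant state, not merely that no mass escapes. Getting the periodic classes (5)--(12) and the unions of intervals (13)--(19) right --- and not overlooking the empty-cell effects at $l^c+r^c=n-1$ and $l^a+r^a=n-1$ --- is where essentially all the care is needed.
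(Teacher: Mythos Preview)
Your proposal is correct and takes essentially the same approach as the paper: the paper itself states that ``the proof is similar to the one of Theorem~\ref{th:main} and is omitted here,'' so the intended argument is precisely the adaptation you describe --- rerunning the simple/multiple transition analysis with Table~\ref{tab:2} in place of Table~\ref{tab:1}, widening each size constraint by $n^m$ whenever a ``fat'' constituent $[N^a,\overline{N}^a]$ or $[N^c,\overline{N}^c]$ is involved, and invoking the symmetry principle to halve the work. Your sketch is in fact more explicit than what the paper provides.
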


The proof is similar to the one of Theorem \ref{th:main} and is omitted here, but is available upon request. 

Theorems \ref{th:main} and \ref{th:main2} lead to clear conclusions concerning
the comparison of absorbing classes in the pure and mixed cases. First of all,
when mixed agents exist in a society, a polarization into two groups (one saying
`yes' and another one saying `no', which was the case under $N^m = \emptyset$)
is not possible anymore. However, under the same conditions as before (see cases
(1) and (2) in Theorem \ref{th:main}), $N^a$ and $N^c$ are now replaced by
$[N^a,\overline{N}^{a}]$ and $[N^c, \overline{N}^{c}]$. In other words, while
anti-conformists (conformists, respectively) continue saying `yes' and
conformists (anti-conformists, respectively) say `no' forever, the new type of
individuals -- mixed agents -- oscillate between `yes' and `no'. In the pure
case, two (simple) intervals $[\emptyset, N^a]$ and $[N^c, N] $ (cases (11) and
(12) in Theorem \ref{th:main}) are possible absorbing classes. With the presence
of mixed agents, we have the corresponding intervals $[\emptyset,
  \overline{N}^{a}]$ and $[N^c, N] $ (cases (3) and (4) in Theorem
\ref{th:main2}) under the same conditions as in the pure case. This means that
while conformists do not change their behavior when mixed agents join the
society and say either `no' (absorbing class $[\emptyset, \overline{N}^{a}]$) or
`yes' (absorbing class $[N^c, N] $) forever, now besides anti-conformists also
mixed agents oscillate. Another consequence of the presence of mixed agents on
possible absorbing classes is that cycles (i.e., periodic classes with only
single states, cases (3) through (7) in Theorem \ref{th:main}) are not
possible anymore. Instead, we have eight periodic classes with mixed agents
oscillating (cases (5) till (12) in Theorem \ref{th:main2}) that correspond to
absorbing classes (3) - (10) of Theorem \ref{th:main}. The conditions for the
existence of these periodic classes in the mixed case are the same as the ones
for the corresponding `pure' cases, but adjusted by the presence of $n^m$ mixed
agents. Finally, the unions of intervals in the mixed case (absorbing classes
(13) till (18) in Theorem \ref{th:main2}) correspond to the unions of intervals
(13) till (18) in the pure case (Theorem \ref{th:main}), but again with mixed
agents oscillating and the conditions taking into account $n^m$.

\subsection{Analysis of the pure case when $n$ tends to infinity}
We make the assumption that the number of agents is very large and approximate
this situation by making $n$ tend to infinity. For notational convenience, each
of the previous parameters $n^a,l^a,r^a,l^c,r^c$ is now divided by $n$, keeping
(with some abuse) the same notation for these parameters, so that
now these are real numbers in $[0,1]$. It follows that
\begin{align*}
n^c &= 1-n^a\\
l^a+r^a & <1\\
l^c+r^c & <1,
\end{align*}
Note that the particular cases $l^a+r^a=n-1$, $l^c+r^c=n-1$ appearing in classes
(15) to (19) become limit cases $l^a+r^a\rightarrow 1$, $l^c+r^c\rightarrow 1$,
making the latter classes appearing only as limit cases. 

We study in details in the rest of this section some specific
situations. Observe that from the results of Section~\ref{sec:noanti}, the
``model'' is not continuous in $n^a$ at 0, in the sense that if $n^a>0$, we have
proved that $2^N$ can be an absorbing class (it suffices to take
$l^c=r^c=l^a=r^a=0$). Similarly, it is not continuous in $n^a$ at 1.

Given an aggregation rule with $l,r$ specified, we introduce the new
parameter 
\[
\gamma = \frac{1}{1-r-l},
\]
which is the average slope of the function giving the probability $p$ to say
'yes' given the number of agents saying `yes'. $l$ indicates how much an agent
needs agents saying `yes' before starting to change his mind (this could be
called the {\it firing threshold}), while $\gamma$ measures the {\it
  reactiveness} once he has started to change his mind. Note that
$\gamma\in\left[\frac{1}{1-l},\infty\right[$. On the other hand, $1-r$ is the
    {\it saturation threshold}, beyond which there is no more change of opinion
    for the agent. In a dual way, $r$ can be interpreted as how much an agent
    needs agents saying `no' before starting to change his mind, while $1-l$ is
    the saturation threshold beyond which additional agents saying `no' have no
    effect.  

The pair of parameters $(l,\gamma)$ may be easier to interpret than the pair
$(l,r)$, and so we will use it in the sequel whenever convenient. We have
$r=1-l-\frac{1}{\gamma}$.

In what follows, we make a detailed analysis and interpretation of the
convergence in several typical situations. We assume throughout $N^m=\emptyset$.

\paragraph{Situation 1: $l^a=l^c=l$ and $r^a=r^c=r$.} This depicts a society
where all agents, conformists or anti-conformists, have the same influenceability
characteristics. Among the initially 15 possible absorbing classes, only the
following ones remain possible:
\begin{itemize}
\item $N^a$ if and only if $n^a\leq l$
\item $N^c$ if and only if $n^a\leq r$
\item cycle $N^a\srightarrow N^c\srightarrow N^a$ if and only if $n^a\geq 1-l$ and $n^a\geq 1-r$
\item $2^N$ otherwise.
\end{itemize}
Let us translate this with the pair $(l,\gamma)$.  We obtain:
\begin{enumerate}
\item $N^a$ if and only if $n^a\leq l$
\item $N^c$ if and only if $n^a\leq 1-l-\frac{1}{\gamma}$
\item cycle $N^a\srightarrow N^c\srightarrow N^a$ if and only if $n^a\geq 1-l$ and $n^a\geq \frac{1}{\gamma}+l$
\item $2^N$ otherwise.
\end{enumerate}
We make a ``phase diagram'' with the three parameters $n^a,l,\gamma$ showing the
possible absorbing classes, keeping in mind that $\gamma\geq \frac{1}{1-l}$.
\begin{figure}[htb]
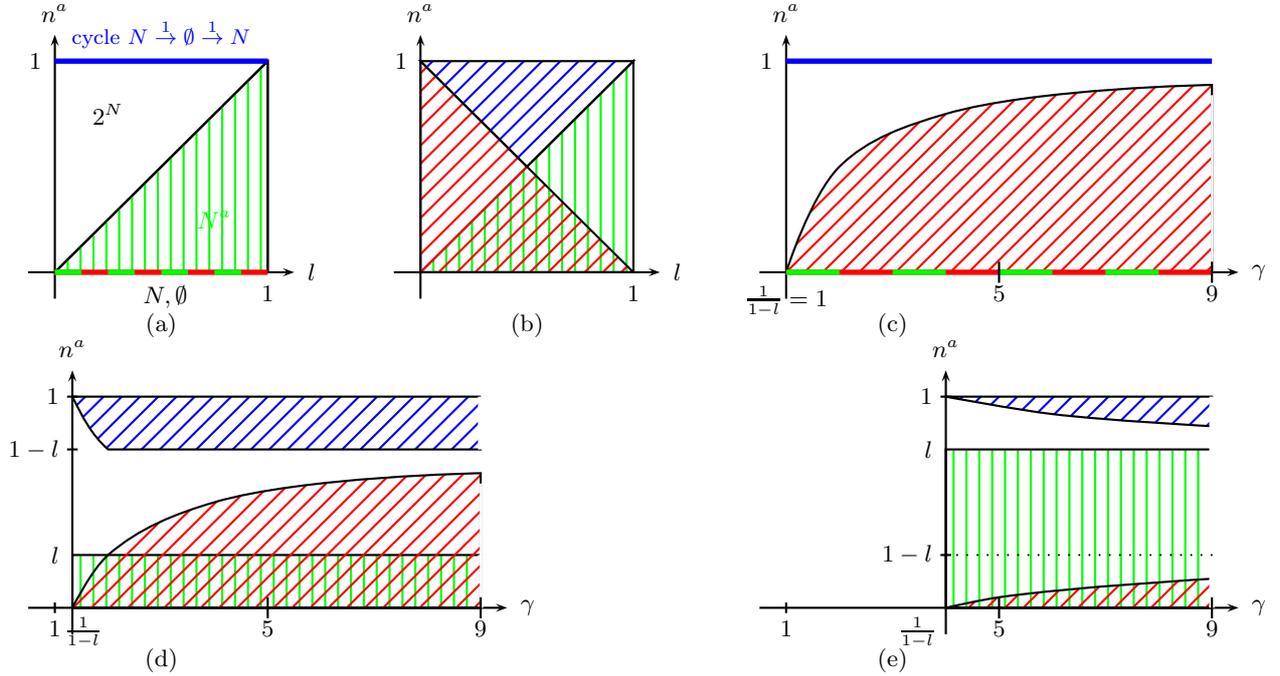

\begin{center}
\psset{unit=0.7cm}
\pspicture(-0.5,-1)(4.5,5)
\psline{->}(-0.5,0)(4.5,0)
\psline{->}(0,-0.5)(0,4.5)
\uput[0](4.5,0){$l$}
\uput[-90](4,0){\small $1$}
\uput[90](0,4.5){$n^a$}
\uput[180](0,4){\small $1$}
\rput(2,-1){(a)}
\psline(0,4)(4,4)(4,0)
\psline(0,0)(4,4)
\rput(3,1){\green $N^a$}
\rput(1,3){$2^N$}
\pspolygon[hatchcolor=green,hatchwidth=0.8pt,hatchangle=0,fillstyle=vlines](0,0)(4,4)(4,0)
\psline[linewidth=2pt,linecolor=red](0,0)(4,0)
\psline[linewidth=2pt,linecolor=green](0,0)(0.5,0)
\psline[linewidth=2pt,linecolor=green](1,0)(1.5,0)
\psline[linewidth=2pt,linecolor=green](2,0)(2.5,0)
\psline[linewidth=2pt,linecolor=green](3,0)(3.5,0)
\psline[linewidth=2pt,linecolor=blue](0,4)(4,4)
\uput[90](2,4){\blue\scriptsize cycle $N\srightarrow \emptyset\srightarrow N$}
\uput[-90](2,0){ $N,\emptyset$}
\endpspicture
\hfill
\pspicture(-0.5,-1)(4.5,5)
\psline{->}(-0.5,0)(4.5,0)
\psline{->}(0,-0.5)(0,4.5)
\uput[0](4.5,0){$l$}
\uput[-90](4,0){\small $1$}
\uput[90](0,4.5){$n^a$}
\uput[180](0,4){\small $1$}
\rput(2,-1){(b)}
\pspolygon[hatchcolor=green,hatchwidth=0.8pt,hatchangle=0,fillstyle=vlines](0,0)(4,4)(4,0)
\pspolygon[hatchcolor=red,hatchwidth=0.8pt,fillstyle=hlines](0,0)(0,4)(4,0)
\pspolygon[hatchcolor=blue,hatchwidth=0.8pt,fillstyle=hlines](0,4)(2,2)(4,4)
\endpspicture
\hfill
\pspicture(-0.5,-1)(8.5,5)
\psline{->}(-0.5,0)(8.5,0)
\psline{->}(0,-0.5)(0,4.5)
\uput[0](8.5,0){$\gamma$}
\uput[-90](4,0){\small $5$}
\uput[-90](8,0){\small $9$}
\uput[-90](0,0){\small $\frac{1}{1-l}=1$}
\uput[90](0,4.5){$n^a$}
\uput[180](0,4){\small $1$}
\rput(2,-1){(c)}
\pscustom[hatchcolor=red,hatchwidth=0.8pt,fillstyle=hlines]{
\pscurve(0,0)(1,2)(3,3)(8,3.55)
\psline(8,0)(0,0)
}
\psline[linecolor=white](8,3.355)(8,0)
\psline[linewidth=2pt,linecolor=red](0,0)(8,0)
\psline[linewidth=2pt,linecolor=green](0,0)(1,0)
\psline[linewidth=2pt,linecolor=green](2,0)(3,0)
\psline[linewidth=2pt,linecolor=green](4,0)(5,0)
\psline[linewidth=2pt,linecolor=green](6,0)(7,0)
\psline[linewidth=2pt,linecolor=blue](0,4)(8,4)
\psline(4,-0.2)(4,0.2)
\psline(8,-0.2)(8,0.2)
\endpspicture

\medskip

\pspicture(-0.5,-1)(8.5,5)
\psline{->}(-0.5,0)(8.5,0)
\psline{->}(0.33,-0.5)(0.33,4.5)
\uput[0](8.5,0){$\gamma$}
\uput[-90](4,0){\small $5$}
\uput[-90](8,0){\small $9$}
\uput[-45](0,0){\small $\frac{1}{1-l}$}
\uput[90](0.33,4.5){$n^a$}
\uput[180](0.33,4){\small $1$}
\rput(2,-1){(d)}
\uput[180](0.33,1){$l$}
\uput[-90](0,0){\small $1$}
\psline(0,-0.1)(0,0.1)
\pspolygon[hatchcolor=green,hatchwidth=0.8pt,hatchangle=0,fillstyle=vlines](0.33,0)(0.33,1)(8,1)(8,0)
\psline[linecolor=white,linewidth=2pt](8,1)(8,0)
\pscustom[hatchcolor=red,hatchwidth=0.8pt,fillstyle=hlines]{
\pscurve(0.33,0)(1,1)(3,2)(8,2.55)
\psline(8,0)(0.33,0)
}
\psline[linecolor=white](8,2.355)(8,0)
\pscustom[hatchcolor=blue,hatchwidth=0.8pt,fillstyle=hlines]{
\psline(8,4)(0.33,4)
\psecurve(0,5)(0.33,4)(1,3)(3,2)
\psline(1,3)(8,3)(8,4)
}
\psline[linecolor=white,linewidth=2pt](8,3)(8,4)
\psline(4,-0.2)(4,0.2)
\psline(8,-0.2)(8,0.2)
\psline(0.23,4)(0.43,4)
\psline(0.23,3)(0.43,3)
\uput[180](0.33,3){$1-l$}
\endpspicture
\hfill
\pspicture(-0.5,-1)(8.5,5)
\psline{->}(-0.5,0)(8.5,0)
\psline{->}(3,-0.5)(3,4.5)
\uput[0](8.5,0){$\gamma$}
\uput[-90](4,0){\small $5$}
\uput[-90](8,0){\small $9$}
\uput[-135](3,0){\small $\frac{1}{1-l}$}
\uput[90](3,4.5){$n^a$}
\uput[180](3,4){\small $1$}
\uput[-90](0,0){\small $1$}
\psline(0,-0.1)(0,0.1)
\psline(2.9,4)(3.1,4)
\psline(2.9,1)(3.1,1)
\rput(2,-1){(e)}
\uput[180](3,1){$1-l$}
\pspolygon[hatchcolor=green,hatchwidth=0.8pt,hatchangle=0,fillstyle=vlines](3,0)(3,3)(8,3)(8,0)
\pscustom[hatchcolor=red,hatchwidth=0.8pt,fillstyle=hlines]{
\pscurve(3,0)(4,0.2)(8,0.55)
\psline(8,0)(3,0)
}
\pscustom[hatchcolor=blue,hatchwidth=0.8pt,fillstyle=hlines]{
\pscurve(3,4)(5,3.67)(8,3.44)
\psline(8,4)(3,4)
}
\psline[linecolor=white,linewidth=2pt](8,4)(8,0)
\psline(4,-0.2)(4,0.2)
\psline(8,-0.2)(8,0.2)
\psline[linestyle=dotted](3,1)(8,1)
\uput[180](3,3){$l$}
\endpspicture

\end{center}
\caption{Phase diagram for Situation 1: (a) $\gamma$ has minimum value
  $\frac{1}{1-l}$ ($r=0$); (b)
  $\gamma\rightarrow\infty$; (c) $l=0$; (d) $l\in[0,1/2]$; (e) $l\in\left[1/2,1\right[$. 
Color code:
      white=$2^N$, blue=cycle, red=$N^c$, green=$N^a$}
      \label{fig:1}
\end{figure}

We comment on these phase diagrams.
\begin{itemize}
\item The two first cases (i) and (ii) of absorbing classes are
  ``polarization'', the last case (iv) is ``chaos'' (no
  convergence). The extreme cases ($n^a=0$ or 1) are already commented (see
  Section~\ref{sec:noanti}). Also, it can be checked that when
  $\gamma$ tends to infinity, the limit cases (15) to (19) do not appear as the
  existence conditions become contradictory.
\item When $n^a$ increases from 0 to 1, we go from consensus, next to polarization,
  next to chaos, and finally to a cycle.
\item When the firing threshold $l$ is very low (c), there is a cascade effect
  leading to a polarization where all conformist agents say `yes', which increases
  with reactiveness. Indeed, suppose that all agents say `no'. Then, as $l$ is
  very low, all anti-conformists start to say `yes', which make gradually the
  conformists saying `yes'. If $n^a$ is not too large, the conformists rapidly
  reach the consensus `yes'. Otherwise, as anti-conformists will say `no' again,
  the non-negligible proportion of `no' causes trouble in the convergence and a
  chaotic situation may appear. As the reactiveness increases, the chaotic
  behavior is less and less probable.
\item Similarly, when the firing threshold is high (e), there is a cascade
  effect leading all conformists to say `no', if the proportion of
  anti-conformists is not too small but less than the firing threshold. Indeed,
  suppose that all agents say `yes' at some time. Then all anti-conformists will
  say `no'. As the firing threshold is high, some conformists will start to say
  `no', and there will be more and more. At the same time, as the number of
  `yes' in the society is decreasing, the anti-conformists will gradually change
  to `yes'. The situation of polarization remains stable unless the number of
  anti-conformists exceeds the firing threshold, in which case a chaotic
  situation (or even a cycle) occurs. 
\item (d) shows an intermediary situation where both cascades can occur. The
  higher the firing threshold, the higher the probability to have a cascade of
  `no' among the conformists. The two cases (c) and (e) show how, in a society
  of conformists, the opinion can be manipulated by introducing a relatively
  small proportion of anti-conformists. The final opinion depends
  essentially on the firing threshold.  
\end{itemize}

\paragraph{Situation 2: $l^a=r^a$ and $l^c=r^c$.} Here, there is a symmetry
between $l$ and $r$. As mentioned before, this means that agents treat in the
same way `yes' and `no' opinions. This assumption might be relevant for instance
when voting for two candidates. However, it might not be relevant when saying
`yes' means `adopting a new technology', where a bias towards a status quo or a
bias towards technology adoption makes sense. Under this assumption, the
possible absorbing classes are:
\begin{itemize}
\item $N^a,N^c$ iff $n^a\leq l^a$ and $n^a\leq l^c$ (referred hereafter as ``polarization'')
\item cycle $N^a\srightarrow N^c\srightarrow N^a$ iff $n^a\geq 1-l^a$ and
  $n^a\geq 1-l^c$ (referred hereafter as ``cycle'')
\item periodic class
  $[\emptyset,N^c]\srightarrow[N^a,N]\srightarrow[\emptyset,N^c]$ iff $n^a\geq
  1-l^a$ and $l^c<n^a<1-l^c$ (referred hereafter as ``fuzzy cycle'')
\item $[\emptyset,N^a],[N^c,N]$ iff $n^a\leq l^c$ and $l^a<n^a<1-l^a$ (referred
  hereafter as ``fuzzy polarization'')
\item $2^N$ (referred hereafter as ``chaos'') otherwise.
\end{itemize}
It can be checked that in the limiting case where $\gamma^a,\gamma^c$ tend to
infinity, classes (15) to (19) are not possible since then $l^a=l^c=1/2$ which
makes the conditions of existence contradictory.

Figure~\ref{fig:2} gives four cuts of the phase diagram with the three parameters
$n^a,l^a,l^c$. Recall that here $l^a,l^c$ vary in $\left[0,1/2\right[$, and
    $\gamma^a=\frac{1}{1-2l^a}$, $\gamma^c=\frac{1}{1-2l^c}$.   
\begin{figure}[htb]
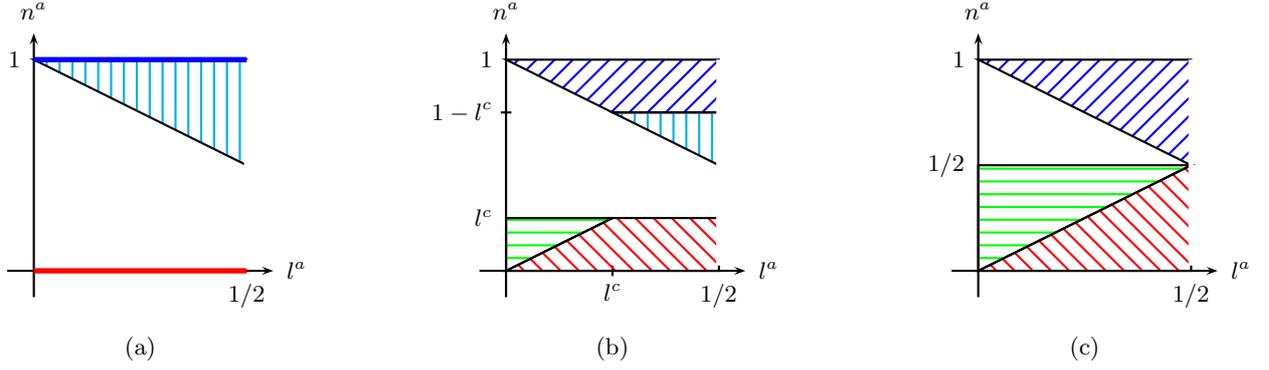

\begin{center}
\psset{unit=0.7cm}
\pspicture(-0.5,-1)(4.5,5)
\psline{->}(-0.5,0)(4.5,0)
\psline{->}(0,-0.5)(0,4.5)
\pspolygon[hatchcolor=cyan,hatchwidth=0.8pt,hatchangle=0,fillstyle=vlines](0,4)(4,2)(4,4)
\psline[linewidth=2pt,linecolor=white](4,0.1)(4,4)
\uput[0](4.5,0){$l^a$}
\uput[-90](4,0){\small $1/2$}
\uput[90](0,4.5){$n^a$}
\uput[180](0,4){\small $1$}
\uput[-90](2,-1){(a)}
\psline[linewidth=2pt,linecolor=red](0,0)(4,0)
\psline[linewidth=2pt,linecolor=blue](0,4)(4,4)
\endpspicture
\hfill
\pspicture(-0.5,-1)(4.5,5)
\psline{->}(-0.5,0)(4.5,0)
\psline{->}(0,-0.5)(0,4.5)
\uput[0](4.5,0){$l^a$}
\uput[-90](4,0){\small $1/2$}
\uput[90](0,4.5){$n^a$}
\uput[180](0,4){\small $1$}
\uput[180](0,3){\small $1-l^c$}
\uput[180](0,1){\small $l^c$}
\uput[-90](2,0){\small $l^c$}
\psline(-0.1,3)(0.1,3)
\psline(2,-0.1)(2,0.1)
\uput[-90](2,-1){(b)}
\pspolygon[hatchcolor=blue,hatchwidth=0.8pt,fillstyle=hlines](0,4)(4,4)(4,3)(2,3)
\pspolygon[hatchcolor=cyan,hatchwidth=0.8pt,hatchangle=0,fillstyle=vlines](2,3)(4,3)(4,2)
\pspolygon[hatchcolor=red,hatchwidth=0.8pt,fillstyle=vlines](0,0)(2,1)(4,1)(4,0)
\pspolygon[hatchcolor=green,hatchwidth=0.8pt,hatchangle=0,fillstyle=hlines](0,0)(0,1)(2,1)
\psline[linewidth=2pt,linecolor=white](4,0.1)(4,4)
\endpspicture
\hfill
\pspicture(-0.5,-1)(4.5,5)
\psline{->}(-0.5,0)(4.5,0)
\psline{->}(0,-0.5)(0,4.5)
\uput[0](4.5,0){$l^a$}
\uput[-90](4,0){\small $1/2$}
\uput[90](0,4.5){$n^a$}
\uput[180](0,4){\small $1$}
\uput[180](0,2){\small $1/2$}
\uput[-90](2,-1){(c)}
\pspolygon[hatchcolor=blue,hatchwidth=0.8pt,fillstyle=hlines](0,4)(4,4)(4,2)
\pspolygon[hatchcolor=red,hatchwidth=0.8pt,fillstyle=vlines](0,0)(4,2)(4,0)
\pspolygon[hatchcolor=green,hatchwidth=0.8pt,hatchangle=0,fillstyle=hlines](0,0)(4,2)(0,2)
\psline[linewidth=2pt,linecolor=white](4,0.1)(4,4)
\endpspicture
\end{center}
\caption{Phase diagram for Situation 2: (a) $l^c=0$; (b)
  $l^c\in\left]0,1/2\right[$; (c)  $l^c\rightarrow 1/2$. Color code:
      white=chaos, blue=cycle, cyan=fuzzy cycle, red=polarization, green=fuzzy polarization}
      \label{fig:2}
\end{figure}
Note that the polarization at $n^a=0$ becomes a consensus (either $N$ or
$\emptyset$). As before, the cycle at $n^a=1$ is $N\srightarrow
\emptyset\srightarrow N$. 

Some comments about these phase diagrams:
\begin{itemize}
\item Compared to Situation~1, the chaos case takes a relatively large area,
  which grows as $l^c$ or $l^a$ tend to 0 (agents have a low firing threshold,
  but a low reactiveness). In particular, it can be observed that when
  conformist agents have a low reactivity, a very small proportion of
  anti-conformists in the society suffices to make it chaotic. 
\item Contrarily to Situation 1, there is no cascade effect. Indeed, the
  absorbing states $N^a$ and $N^c$ always appear together, hence both are
  possible with some probability, or both are impossible. This polarization
  effect happens if the anti-conformists are not ``seen'' by the conformists
  (their number stay below the firing threshold), and all the more since the
  anti-conformists are reactive. Less reactive anti-conformists have a tendency
  to provoke fuzzy polarization. 
\item As for Situation 1, cycles and fuzzy cycles happen all the more since the
  number of anti-conformists is growing. A limit phenomenon happens when
  $l^a,l^c,n^a$ tend all together to 1/2: a kind of ``triple point''
  appears (see (c)), in the sense that the three types of behavior
  (polarization, fuzzy polarization and cycle) happen together, which is also
  visible for Situation~1 (Figure~\ref{fig:1}(b)). Observe that the mix of
  polarization and fuzzy polarization are nothing else than the limit classes
  (15) and (16). According to Theorem~\ref{th:main}, they happen iff
  $l^c,l^a\rightarrow 1/2$ and $n^a=1/2$, which is exactly the locus of this
  triple point.  
\end{itemize}

\paragraph{Situation 3: The case where $n^a$ tends to 0.}
Let us put $n^a=\epsilon>0$, arbitrarily small. Therefore,
$n^c=1-\epsilon$. This case is the most plausible in real situations, as
anti-conformists can be reasonably thought of forming a tiny part of the
society. The crucial question is however to know whether this tiny part can have
a non-negligible effect on the opinion of the society.

The first task is to see which of the 15 classes remain possible. One
can check that:
\begin{itemize}
\item (1) $N^a$ iff  $l^c\wedge l^a\geq \epsilon$;
\item (2) $N^c$ iff  $r^c\wedge r^a\geq\epsilon$;
\item (3) $N^a\srightarrow\emptyset\srightarrow N^a$ iff $l^c\geq\epsilon$ and
  $r^a\geq 1-\epsilon$;
\item (4) $N^c\srightarrow N\srightarrow N^c$ iff $r^c\geq\epsilon$ and
  $l^a\geq 1-\epsilon$
\item Classes (5) to (10) are impossible;
\item (11) $[\emptyset,N^a]$ iff $l^a<\epsilon$, $l^c\geq\epsilon$ and
  $r^a<1-\epsilon$;
\item  (12) $[N^c,N]$ iff $r^a<\epsilon$, $r^c\geq\epsilon$ and
  $l^a<1-\epsilon$;
\item (13) $[\emptyset,N^a]\cup[\emptyset,N^c]$ iff $l^c<\epsilon$,
  $r^c<\epsilon$ and $r^a>1-\epsilon$;
\item (14)  $[N^a,N]\cup[N^c,N]$ iff $l^c<\epsilon$,
  $r^c<\epsilon$ and $l^a>1-\epsilon$;
\item (20) $2^N$ otherwise. 
\end{itemize}
Keeping in mind that $\epsilon$ is small, we can provide the following
interpretation of the above absorbing classes: (1) and (2) are consensus to `no'
and `yes', respectively, up to the negligible fraction of anti-conformists. (3)
is almost the same as (1), while (4) is almost the same as (2). Also, (11) and
(12) are almost the same as (1) and (2), respectively. (13) is a chaotic
situation with mainly a tendency to `no' for the society, while (14) is also a
chaotic situation, but with a tendency of `yes'.

From this analysis, we can draw the following conclusions:
\begin{itemize}
\item Suppose that the conformists have $l^c,r^c>\epsilon$: this means that they
  cannot ``see'' the anti-conformists. Then (1), (2) are  together possible
    as soon as $r^a,l^a>\epsilon$ (the anti-conformists do not react to
    themselves). On the border area where $l^a$ or $r^a$ is smaller than
    $\epsilon$, classes (3) and (11) (almost consensus `no') or classes (4) and
    (12) (almost consensus `yes') appear. 
The situation is made clear by looking at Figure~\ref{fig:3} (recall that $l^a+r^a<1$).
\begin{figure}[htb]
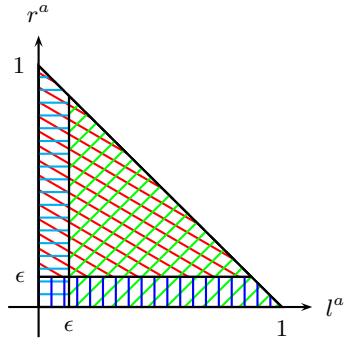

\begin{center}
\psset{unit=0.8cm}
\pspicture(-0.5,-1)(4.5,5)
\psline{->}(-0.5,0)(4.5,0)
\psline{->}(0,-0.5)(0,4.5)
\uput[0](4.5,0){$l^a$}
\uput[-90](4,0){\small $1$}
\uput[-90](0.5,0){\small $\epsilon$}
\uput[90](0,4.5){$r^a$}
\uput[180](0,4){\small $1$}
\uput[180](0,0.5){\small $\epsilon$}
\psline[linestyle=dashed](0,4)(4,0)
\pspolygon[hatchcolor=red,hatchwidth=0.8pt,hatchangle=60,fillstyle=vlines](0,0.5)(0,4)(3.5,0.5)
\pspolygon[hatchcolor=green,hatchwidth=0.8pt,fillstyle=hlines](0.5,0)(0.5,3.5)(4,0)
\pspolygon[hatchcolor=cyan,hatchwidth=0.8pt,hatchangle=0,fillstyle=hlines](0,0)(0,4)(0.5,3.5)(0.5,0)
\pspolygon[hatchcolor=blue,hatchwidth=0.8pt,hatchangle=0,fillstyle=vlines](0,0)(4,0)(3.5,0.5)(0,0.5)
\endpspicture
\caption{Phase diagram for Situation 3, with $l^c,r^c>\epsilon$. Color code:
  green=$N^a$, red=$N^c$, cyan=almost consensus `no'
((3) or (11)), blue=almost consensus `yes' (4) or (12)}
\label{fig:3}
\end{center}
\end{figure}
Observe that in all parts of the triangle, both consensus `yes' and `no'
  coexist. Therefore, no  cascades of `yes' or `no' may occur. Also, no
cycle nor chaotic behavior is possible, and we conclude that this situation is
almost identical to the situation where no anti-conformist is present.

\item Suppose that the conformists have very small $l^c,r^c$ ($<\epsilon$),
   which means that they react to the anti-conformists. Then most of the
    classes become impossible, in particular $N^a,N^c$, and only  (13) (if
  $r^a$ is large enough), and (14) (if $l^a$ is large
  enough) remain. Otherwise, we get $2^N$ (see Figure~\ref{fig:4}).
\begin{figure}[htb]
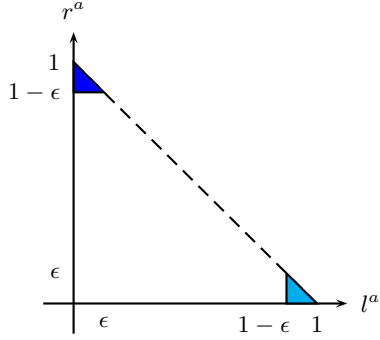

\begin{center}
\psset{unit=0.8cm}
\pspicture(-0.5,-1)(4.5,5)
\psline{->}(-0.5,0)(4.5,0)
\psline{->}(0,-0.5)(0,4.5)
\uput[0](4.5,0){$l^a$}
\uput[-90](4,0){\small $1$}
\uput[-120](3.5,0){\small $1-\epsilon$}
\uput[-90](0.5,0){\small $\epsilon$}
\uput[90](0,4.5){$r^a$}
\uput[180](0,4){\small $1$}
\uput[180](0,3.5){\small $1-\epsilon$}
\uput[180](0,0.5){\small $\epsilon$}
\psline[linestyle=dashed](0,4)(4,0)
\pspolygon[fillstyle=solid,fillcolor=blue](0,3.5)(0,4)(0.5,3.5)
\pspolygon[fillstyle=solid,fillcolor=cyan](3.5,0)(4,0)(3.5,0.5)
\endpspicture
\caption{Phase diagram for Situation 3, with $l^c,r^c<\epsilon$. Color code:
  blue=chaotic `no' (13), cyan=chaotic `yes' (14),
white=chaos (20)}
\label{fig:4}
\end{center}
\end{figure}
In this case, no consensus is possible, even in a weak sense, and only chaotic
situations arise. 
\end{itemize}

\section{Related literature}\label{sec:related-literature}

In this section we briefly mention some related literature different from our previous works recalled in Section \ref{sec:anoinf}. 

Opinion conformity has been studied widely in various fields and settings, and by using different approaches; for surveys, see e.g., \cite{jac08a,ace-ozd11}. A subset of this literature focuses on various extensions of the DeGroot model (\cite{deg74}), see e.g., 
\cite{dem03,jac08a,gol-jac10,bue-hel-pic14,bue-hel-klo14,gra-man-rus-tan17}, and for a survey, e.g., \cite{gol-sad16}. 
So far, the analysis of the anti-conformist behavior is much less common than the study devoted to the phenomenon of conformism. 

\cite{gra-rus07,gra-rus09a} address the problem of measuring negative influence in a social network but only in one-step (static) settings. \cite{bue-hel-klo14} study a dynamic model of opinion formation, where agents update their opinion by averaging over opinions of their neighbors, but might misrepresent their own opinion by conforming or counter-conforming with the neighbors. Although their model is related to \cite{deg74}, it is very different from our framework of anonymous influence with conformist and anti-conformist agents. Moreover, the authors focus on the relation between an agent's influence in the long run opinion and network centrality, and on wisdom of the society, while we determine all possible absorbing classes and conditions for their occurrence. 

\cite{kon-leb-web97} present a setting completely different from the present paper but related to our definition of anti-conformist agents. They consider a non-cooperative anonymous game in which one of the assumptions on individuals' preferences is partial rivalry, implying that the payoff of every player increases if the number of players who choose the same strategy declines. The authors examine the existence of strong Nash equilibrium in pure strategies for such a game with a finite set of players, and then with continuum of players. 

There are several other works that study network formation and anti-coordination games, i.e., games where agents prefer to choose an action different from that chosen by their partners. Our approach is different from anti-coordination games, in particular, because we have an essential dissymmetry between agents. \cite{bra07} investigates anti-coordination games played on fixed networks. In his model, agents are embedded in a fixed network and play with each of their neighbors a symmetric anti-coordination game, like the chicken game. The author examines how social interactions interplay with the incentives to anti-coordinate, and how the social network affects choices in equilibrium. He shows that the network structure has a much stronger impact on the equilibria than in coordination games.    
\cite{bra-lop-goy-veg04} study anti-coordination games played on endogenous networks, where players choose partners as well as actions in coordination games played with their partners. They characterize (strict) Nash architectures and study the effects of network structure on agents' behavior. The authors show that both network structure and induced behavior depend crucially on the value of cost of forming links. 
\cite{lop09} extends the model of \cite{bra-lop-goy-veg04} which is one-sided to a framework in which the cost of link formation is not necessarily distributed as in the one- or two-sided models, but is shared between the two players forming the link. She introduces an exogenous parameter specifying the partition of the cost and characterizes the Nash equilibria depending on the cost of link formation and the cost partition. 

\cite{koj-tak07} introduce the class of anti-coordination games and investigate the dynamic stability of the equilibrium in a one-population setting. They focus on the best response dynamic where agents in a large population take myopic best responses, and the perfect foresight dynamic where agents maximize total discounted payoffs from the present to the future. 

\cite{cao-gao-qu-yan-yan13} consider the fashion game of pure competition and pure cooperation. It is a network game with conformists (`what is popular is fashionable') and rebels (`being different is the essence') that are located on social networks (a spatial cellular automata network and small-world networks). The authors run simulations showing that in most cases players can reach a very high level of cooperation through the best response dynamic. They define different indices (cooperation degree, average satisfaction degree, equilibrium ratio and complete ratio) and apply them to measure players' cooperation levels. 

Our setting can be applied to some existing models, like herd behavior and information cascades (\cite{ban92,bik-hir-wel92}) which have been used to explain fads, investment patterns, etc.; see \cite{and-hol08} for a survey of experiments on cascade behavior. Although \cite{bik-hir-wel92} have already addressed the issue of fashion, the present model takes a different turn, since we assume no sequential choices and some agents are anti-conformists while others are conformists. In the model of herd behavior (\cite{ban92}) agents play sequentially and wrong cascades can occur. Though it can be rational to follow the crowd, some anti-conformists may want to play a mixed-strategy: either following the crowd or not. This is particularly true under bounded rationality. Agents may not be able to know what is rational, for example because
they lack information or do not have enough time or computational capacities. As a consequence, they may play according to rules of the thumb like counting how many people said `yes' rather than computing bayesian probabilities. \cite{cha-lar-xan16} show in a lab experiment that people tend to behave according to the DeGroot model rather than to Bayesian updating; see also \cite{cel-kar04}. This is also consistent with \cite{and-hol97} who show that counting is the most salient bias to explain departure from Bayesian updating. 

\section{Concluding remarks}\label{sec:conclusion}

Clearly, the present paper has taken a different road than the references mentioned in Section \ref{sec:related-literature}. We analyze a process of opinion formation in a society with different types of agents: pure conformists, pure anti-conformists, and mixed agents. We focus on anonymous influence, where a change of an agent's opinion depends on the number of agents with a certain opinion and not on their identities. We determine all possible absorbing classes and conditions for their occurrence for the society without mixed agents as well as for the mixed case. Moreover, the analysis of a very large society in different types of situations is provided. 

First of all, our study confirms and puts in precise terms what the intuition
says to us: the introduction of anti-conformists in a society, even in a very
small proportion, prevents from reaching a consensus and causes either
polarization or various instabilities: cycles, chaotic behavior, etc. Our study
has shown that, even under some simplifying assumptions (the parameters
$l^c,r^c,l^a,r^a$ are supposed to be the same for every agent in a category), the
convergence issue is very complex and many (up to 20) different situations can
occur. Despite this apparent complexity, we have managed to draw some general
and instructive conclusions which are valid in different typical situations. We
summarize below our main findings of Section~3, established in the pure case
(that is, agents are either purely conformist or purely anti-conformist) and
with a society of large size:
\begin{itemize}
\item In a society where all agents have the same influenceability
  characteristics, a cascade effect leading to a polarization is likely to
  occur. The type of polarization depends on the firing threshold, i.e., the
  proportion of `yes' which is necessary to start being influenced. If the
  firing threshold is low, then all conformist agents will finally say `yes',
  while if the firing threshold is high, the cascade effect leads all
  conformists to say `no'. This cascade phenomena happen even with a very small
  number of anti-conformists, and tend to be blurred by a chaotic behavior if
  the proportion of anti-conformists becomes large. It shows a very important
  fact: {\it the opinion of a society can be manipulated by introducing a small
    proportion of anti-conformists in it (opinion reversal).} Hence, anti-conformists do not only
  introduce chaotic behavior, they can steer the opinion in some direction.
\item When agents have a symmetric behavior w.r.t. `yes' or `no' in terms of
  influenceability, no cascade phenomenon can occur, and a chaotic behavior is
  very likely. A polarization can occur however, if the anti-conformists are not
  ``seen'' by the conformists (i.e., their number stays below the firing
  threshold), and all the more since the anti-conformists are reactive.
\item When the proportion of anti-conformists becomes very small, and if they
  are not ``seen'' by the conformists, then the situation is as if there were no
  anti-conformists at all (this shows a kind of continuity property of the
  model). If on the contrary they can be seen, some cascade effect
  is possible (precisely, only if either $l^a$ or $r^a$ is smaller
  than the proportion of anti-conformists).
\item Lastly we mention a special situation similar to a triple point in
  physics: the three types of behavior (polarization, fuzzy polarization and
  cycle) coexist. This can happen if and only if half of the population is
  anti-conformist and the firing threshold of conformists and anti-conformists
  is equal to 1/2.
\end{itemize}  
The introduction of mixed agents has clear effects on opinion formation. Mixed
agents do not change the number of possible absorbing classes, but their
presence blurs them, because the opinion of mixed agents always
oscillate between conformism and anti-conformism. This means that neither $N^a$
nor $N^c$ can appear as absorbing states or be a constituent of an absorbing
class, but they are replaced by their blurred version, where any subset of mixed
agents can be present. As a consequence, polarization {\it stricto sensu} cannot
appear anymore.

\section{Acknowledgments}
The authors thank the Agence Nationale de la Recherche for
financial support under contracts ANR-13-BSHS1-0010 (DynaMITE) and ANR-10-LABX-93-01 (Labex OSE).

\bibliographystyle{plainnat}
\bibliography{rusinowska,references}

\appendix
\section{Proof of Theorem~\ref{th:main}}
Our strategy is based on (F6): aperiodic absorbing classes are connected collections $\cS$
such that $\cS\srightarrow \cS$. Periodic absorbing classes are of the form
$\cS_1\srightarrow\cdots\srightarrow \cS_p$ with all $\cS_i$ pairwise
incomparable, and $\cS_1\cup\cdots\cup\cS_p$ is connected. Consequently, we
study all possible kinds of transition $\cS\srightarrow \cT$, and check
connectedness for each candidate. We distinguish between ``simple'' transitions
of the type $\cB\srightarrow \cB'$ with $\cB,\cB'\in\BB$, and ``multiple''
transitions $\cS\srightarrow \cT$, where $\cS,\cT$ are composed with several
elements of $\BB$, e.g., $[\emptyset,N^a]\cup[\emptyset,N^c]$. 

\subsection{Simple transitions}
We focus on transitions of the type $\cB\srightarrow\cB'$, with
$\cB,\cB'\in\BB$, and look for conditions on the parameters of the model to
obtain such transitions. 

Observe that if $\cB'$ is a nontrivial interval, it cannot be the union of other
elements of $\cB$. Therefore, $\cB\srightarrow \cB'$ if and only if for any $S\in\cB$, $S\srightarrow \cB''$ with $\cB''\in\BB$ and $\cB''\subseteq
\cB'$, and there is at least one $S\in \cB$ s.t. $S\srightarrow \cB'$. Let us
denote by $\cC[\cB]$ the conditions on $s=|S|$ to have a sure transition from
$S$ to $\cB$, as given in Table~\ref{tab:1}. All these conditions are intervals.

Observe that all $\cB\in\BB$ are either singletons $\{B\}$ or nontrivial
intervals $[\uB,\oB]$, and $\cB\subset \cB'$ if and only if $\cB=\{\uBp\}$ or
$\{\oBp\}$, with $\cB'=[\uBp,\oBp]$. Hence:
\begin{equation}\label{eq:4}
\cB\srightarrow
\cB'\Leftrightarrow \begin{cases}[\ub,\ob]\subseteq\cC[\cB']\cup\cC[\{\uBp\}]\cup\cC[\{\oBp\}]
  \\  [\ub,\ob]\cap\cC[\cB']\neq\emptyset,\end{cases}
\end{equation}
with $\ub,\ob$ the cardinalities of $\uB,\oB$. Let us apply (\ref{eq:4}) to all
possibilities. When $\{\cB'\}$ is a singleton, the above condition reduces to
$[\ub,\ob]\subseteq\cC[\cB']$, as given in Table~\ref{tab:1}. Otherwise, 
\begin{enumerate}
\item with $\cB'=[\emptyset,N^a]$, we obtain $[\ub,\ob]\subseteq[0,l^c]$ and
  $[\ub,\ob]\cap \left]l^a,n-r^a\right[\cap [0,l^c]\neq\emptyset$, which
    simplifies to
\begin{equation}\label{eq:st1}
[\ub,\ob]\subseteq[0,l^c] \text{ and }
  [\ub,\ob]\cap \left]l^a,n-r^a\right[\neq\emptyset;
\end{equation}
\item with $\cB'=[\emptyset, N^c]$, we obtain 
\begin{equation}\label{eq:st2}
[\ub,\ob]\subseteq[n-r^a,n] \text{ and }
  [\ub,\ob]\cap \left]l^c,n-r^c\right[\neq\emptyset;
\end{equation}
\item with $\cB'=[N^c,N]$, we obtain
\begin{equation}\label{eq:st3}
[\ub,\ob]\subseteq[n-r^c,n] \text{ and }
[\ub,\ob]\cap \left]l^a,n-r^a\right[\neq\emptyset;
\end{equation}
\item with $\cB'=[N^a,N]$, we obtain 
\begin{equation}\label{eq:st4}
[\ub,\ob]\subseteq[0,l^a] \text{ and }
  [\ub,\ob]\cap \left]l^c,n-r^c\right[\neq\emptyset.
\end{equation}
\end{enumerate}
This yields Table~\ref{tab:3}. Observe that the table is symmetric w.r.t. its center by the symmetry principle (Lemma~\ref{lem:sym}): just exchange $r$ with $l$. 
The transitions being sure, all cases on each line are exclusive.

From Table~\ref{tab:3}, we can deduce absorbing classes reduced to singletons or
intervals: they correspond to transitions $\cS\srightarrow\cS$ in the table,
provided they are connected. We obtain:
\begin{enumerate}
\item $N^a$, under the condition $n^c\geq (n-l^c)\vee(n-l^a)$;
\item $N^c$, under the condition $n^c\geq (n-r^c)\vee(n-r^a)$;
\item $[\emptyset,N^a]$, under the condition $ n-l^c  \leq n^c <n-l^a$;
\item $[N^c,N]$, under the condition $n-r^c\leq n^c<n-r^a$.
\end{enumerate}
We check connectedness for (iii) ((iv) follows by symmetry). We see from
Table~\ref{tab:1} that every $S\in[\emptyset, N^a]$ with $s\leq l^a$ has a sure
transition to $N^a$, while the other ones go to every set in the
interval. Therefore, the interval is connected if and only if $N^a$ has
a possible transition to every set in the interval, i.e., we need
$l^a<n^a<n-r^a$ and $n^a\leq l^c$, so the additional condition $n^a<n-r^a$ is
needed. In summary:
\begin{enumerate}
\item $N^a$ is an absorbing class if and only if $n^c\geq (n-l^c)\vee(n-l^a)$;
\item $N^c$ is an absorbing class if and only if $n^c\geq (n-r^c)\vee(n-r^a)$;
\item $[\emptyset,N^a]$ is an absorbing class if and only if $(n-l^c)\vee(r^a+1)  \leq n^c <n-l^a$;
\item $[N^c,N]$ is an absorbing class if and only if $(n-r^c)\vee(l^a+1)\leq n^c<n-r^a$.
\end{enumerate}

In order to get (absorbing) cycles and periodic classes, we study chains of sure
transitions of length 2: $\cS_1\srightarrow \cS_2\srightarrow \cS_3$, with
$\cS_1,\cS_2,\cS_3$ being pairwise disjoint, except possibly $\cS_1=\cS_3$. An
inspection of Table~\ref{tab:3} yields all such possible chains of length
2, summarized in Table~\ref{tab:4}. A second table can be obtained by symmetry.

From Table~\ref{tab:4}, we obtain the following candidates for absorbing cycles and periodic
classes, after eliminating double occurrences and using symmetry:
\begin{enumerate}
\item  $N^a\srightarrow \emptyset\srightarrow N^a$, under the condition
  $n-l^c\leq n^c\leq r^a$; 
\item  $N^c\srightarrow N\srightarrow N^c$, under the condition
  $n-r^c\leq n^c\leq l^a$; 
\item $N^c\srightarrow N^a\srightarrow N^c$, under the condition $n^c\leq
  l^c\wedge l^a\wedge r^c\wedge r^a$;
\item  $[\emptyset, N^c]\srightarrow N^a\srightarrow [\emptyset,N^c]$, under the condition
         $n^c \leq l^c\wedge l^a\wedge
           r^a,r^c<n^c<n-l^c$
\item $[N^a,N]\srightarrow N^c\srightarrow [N^a,N]$, under the condition $n^c
  \leq r^c\wedge r^a\wedge l^a,l^c<n^c<n-r^c$
\item $[N^a,N]\srightarrow [\emptyset,N^c]\srightarrow [N^a,N]$,
             under the condition $r^c \vee l^c <n^c \leq r^a \wedge l^a $.
\end{enumerate}
It remains to check connectedness of (iv) and (vi) ((v) is obtained by
symmetry). For (iv), we must check that $N^a$ has a possible transition to every
set in $[\emptyset,N^c]$. By Table~\ref{tab:1}, we must have $n^a\geq n-r^a$ and
$l^c<n^a<n-r^c$, which is true by the conditions in (iv). We address (vi). We
claim that under the conditions in (vi) $[N^a,N]\cup[\emptyset,N^c]$ is
connected if and only if $N^a\srightarrow [\emptyset,N^c]$ and
$N^c\srightarrow[N^a,N]$. Take any $S\in[\emptyset,N^c]$. Then $S$ goes either
to any set $T$ in $[N^a,N]$ or only to $N^a$ or only to $N$. In the first case,
similarly, $T$ goes either to any set $S'\in[\emptyset,N^c]$ (and we are done)
or only to $\emptyset$ or only to $N^c$. If $T\srightarrow \emptyset$, then we
have $T\srightarrow \emptyset\srightarrow N^a\srightarrow[\emptyset,N^c]$ and we
are done. Otherwise we have $T\srightarrow N^c\rightarrow
N^a\srightarrow[\emptyset,N^c]$. Suppose now that $S\srightarrow N^a$, then
$N^a$ goes to any $S'\in[\emptyset,N^c]$ and we are done. Otherwise,
$S\srightarrow N\srightarrow N^c\rightarrow N^a\srightarrow[\emptyset,N^c]$ and
we are done. This proves sufficiency. Now suppose the condition is not
fulfilled. This means that $N^a$ goes to either $\emptyset$ or $N^c$ (or similar
condition for $N^c$). In fact, due to the conditions in (vi) and
Table~\ref{tab:1}, we have that $N^a\srightarrow \emptyset$, but this yields the
cycle $N^a\srightarrow \emptyset\srightarrow N^a$. 

So in summary, candidates from (i) to (v) are all periodic classes under
the specified conditions, and for (vi), the additional condition that
$N^a\srightarrow [\emptyset,N^c]$ and $N^c\srightarrow[N^a,N]$ yields:
\begin{itemize}
\item[(vi')]$[N^a,N]\srightarrow [\emptyset,N^c]\srightarrow [N^a,N]$
             under the condition $r^c \vee l^c <n^c \leq r^a \wedge l^a \wedge
             (n-l^c-1)\wedge (n-r^c-1)$.
\end{itemize}

For cycles and periodic classes of length 3, by combining the possible chains of
length 2 of Table~\ref{tab:4} with possible transitions of Table~\ref{tab:3}, we
have only one candidate, all other being eliminated because the collections are not disjoint:
\[
N^c\srightarrow \emptyset \srightarrow N^a\srightarrow N^c.
\]
Hence we find, taking
into account the symmetry, two additional cycles:
\begin{enumerate}
\item $N^c\srightarrow \emptyset \srightarrow N^a\srightarrow N^c$, under the
  condition $n^c\leq r^c\wedge r^a\wedge l^c,n^c\geq n-r^a$;
\item $N^a\srightarrow N \srightarrow N^c\srightarrow N^a$, under the
  condition $n^c\leq l^c\wedge l^a\wedge r^c,n^c\geq n-l^a$.
\end{enumerate}

We now show that periodic classes of period greater than three cannot
exist, which finishes the study of simple transitions.
\begin{Lemma}
There exists no periodic class of period $k\geq 4$.
\end{Lemma}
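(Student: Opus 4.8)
The plan is to show that any periodic class of period $k\ge 4$ would force a chain of sure transitions $\cS_1\srightarrow\cS_2\srightarrow\cdots\srightarrow\cS_k\srightarrow\cS_1$ among pairwise disjoint nonempty collections drawn from $\BB$ (recall $2^N$ is excluded by (F4)), and then to derive a contradiction by a counting/structural argument on the list $\cS_1,\dots,\cS_k$. First I would recall from Table~\ref{tab:3} and Table~\ref{tab:4} that the only collections that can appear as a $\cS_i$ in a chain of sure transitions of length $\ge 2$ are among $\{\emptyset\},\{N\},\{N^a\},\{N^c\},[\emptyset,N^a],[N^c,N],[\emptyset,N^c],[N^a,N]$, and that each admissible one-step sure transition $\cB\srightarrow\cB'$ is exactly one of the arrows listed there. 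So a periodic class of period $k$ corresponds to a closed walk of length $k$ with no repeated vertex in the (small, finite) ``transition digraph'' whose vertices are these eight collections and whose arcs are the sure transitions from Table~\ref{tab:3}.

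The key step is then a finite case inspection of this digraph. I would organize it by the ``type'' of each $\cS_i$: note from Table~\ref{tab:1} that $\emptyset$ and $N$ are the only states with a \emph{sure} (deterministic) successor that is itself a singleton not equal to $\emptyset,N$, namely $\emptyset\srightarrow N^a$ and $N\srightarrow N^c$ (fact (F0)); conversely, whenever $\emptyset\in\cS$ we must have $N^a\in$ successor and whenever $N\in\cS$ we must have $N^c\in$ successor (fact (F2)). Using facts (F1)--(F3), I would argue that once a chain of sure transitions contains a nontrivial interval, monotonicity (F3) forces inclusions that quickly either close up into one of the already-found period-$\le 3$ classes or break disjointness. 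The remaining case is a closed walk through only the four singletons $\{\emptyset\},\{N\},\{N^a\},\{N^c\}$; here the admissible arcs among these four (read off Table~\ref{tab:3}: $\emptyset\to N^a$, $N\to N^c$, plus possibly $N^a\to\emptyset$, $N^a\to N$, $N^a\to N^c$, $N^c\to\emptyset$, $N^c\to N$, $N^c\to N^a$, and the self-loops $N^a\to N^a$, $N^c\to N^c$) contain no simple cycle of length $\ge 4$, because every arc out of $\{\emptyset\}$ or $\{N\}$ lands in $\{N^a\}$ or $\{N^c\}$, so any simple closed walk visits $\{N^a\}$ or $\{N^c\}$ after at most one step and then must return within two more steps without revisiting — exhausting the vertex set at $k=3$.

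The main obstacle I anticipate is the bookkeeping in the interval case: one has to be careful that a ``sure'' transition from a nontrivial interval $[\uB,\oB]$ means \emph{every} member $S\in[\uB,\oB]$ has its (possibly $s$-dependent) image inside the next collection, so the conditions $\cC[\cdot]$ from Table~\ref{tab:1} must simultaneously accommodate all $s\in[\ub,\ob]$; this is precisely what makes long chains of intervals collapse, via (F3), into ascending chains that cannot be both pairwise disjoint and periodic. I would therefore handle the interval subcase first, dispatch it with (F3) plus the disjointness requirement, and then finish with the purely-singleton digraph argument above, concluding that no period-$k$ class with $k\ge 4$ exists. \qed
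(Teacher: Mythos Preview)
Your plan has two genuine gaps.

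\medskip

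\textbf{The interval case.} You invoke (F3), but (F3) applies only when $\cS\subseteq\cT_1$, and in a periodic class the blocks are pairwise \emph{disjoint}, so this hypothesis is never met. The paper's argument here is purely combinatorial and does not use (F3): among the six collections in $\BB\setminus\{\{\emptyset\},\{N\}\}$ one cannot choose even three (let alone four) that are pairwise disjoint, because every such collection contains $N^a$ or $N^c$, and in addition any two of the four nontrivial intervals share one of $\emptyset,N,N^a,N^c$. Hence a period-$\ge 4$ class must have $\{\emptyset\}$ or $\{N\}$ among its blocks. You should replace the (F3) appeal by this disjointness count.

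\medskip

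\textbf{The singleton case.} Your claim that the four-vertex digraph on $\{\emptyset\},\{N\},\{N^a\},\{N^c\}$ has no simple cycle of length $\ge 4$ is wrong at the graph level: the $4$-cycle
\[
\emptyset\srightarrow N^a\srightarrow N\srightarrow N^c\srightarrow \emptyset
\]
is present in the abstract transition digraph, since each of its four arcs appears in Table~\ref{tab:3} for some choice of parameters. A purely graph-theoretic inspection therefore cannot exclude it. What actually kills this cycle is a \emph{parametric} incompatibility: $N^a\srightarrow N$ forces $n-l^a\le n^c\le r^c$ and $N^c\srightarrow\emptyset$ forces $n-r^a\le n^c\le l^c$, whence $2n^c\ge 2n-(l^a+r^a)>n$ and $2n^c\le l^c+r^c<n$, a contradiction. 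This inequality step is the heart of the paper's proof and is entirely absent from your plan. Once this $4$-cycle is excluded and one knows (from Table~\ref{tab:3}) that only $\{N^a\}$ or $\{N^c\}$ can transition into $\{\emptyset\}$ or $\{N\}$, the remaining case analysis (e.g., $N^c\srightarrow\emptyset\srightarrow N^a\srightarrow\cB_1$ with $\cB_1$ forced by disjointness to equal $\{N\}$) closes immediately.
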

\begin{proof}
 Let $\cS$ be a periodic class. First, observe that if $\emptyset,N$ are not
elements of $\cS$, it is not possible to choose four distinct elements of
$\mathbb{B}\setminus\{\{ \emptyset\},\{N\}\}$ such that these elements are
pairwise disjoint. Hence, we suppose that there are transitions
$\cB\srightarrow\emptyset$ and/or $\cB\srightarrow N$ in $\cS$. From
Table~\ref{tab:3}, we see that $\cB$ is necessarily $\{N^a\}$ or $\{N^c\}$.

We claim that the cycle $\emptyset \xrightarrow{1} N^a \xrightarrow{1} N
\xrightarrow{1}N^c \xrightarrow{1} \emptyset $ is impossible. Indeed, by
Table~\ref{tab:4}, we have $\emptyset \xrightarrow{1} N^a \xrightarrow{1} N$ iff
$n-l^a\leq n^c\leq r^c$ and $N \xrightarrow{1}N^c \xrightarrow{1} \emptyset $
(its symmetric) iff $n-r^a\leq n^c\leq l^c$. This yields, respectively,
\begin{align*}
2n^c &\geq 2n-l^a-r^a>n\\
2n^c &\leq r^c+l^c<n,
\end{align*}
a contradiction.

Assume that we have a transition to $\emptyset$ (the case for $N$ is obtained by
symmetry). We have either $N^a\srightarrow\emptyset$ (which is discarded because
it leads to the cycle $N^a\srightarrow \emptyset\srightarrow N^a$) or
$N^c\srightarrow \emptyset$. Then, the only possible absorbing class of the
  form $ N^c \xrightarrow{1} \emptyset \xrightarrow{1} N^a \xrightarrow{1} \cB_1
  \xrightarrow{1} \cdots \xrightarrow{1} \cB_p \xrightarrow{1} N^c $ is the
  cycle $\emptyset \xrightarrow{1} N^a \xrightarrow{1} N^c \xrightarrow{1}
  \emptyset $, for, either $\cB_1=N$, and we obtain the impossible cycle in the
  claim above, or $\cB_1$ contains $N^a$ or $N^c$, which is impossible since
  elements in $\cS$ should be pairwise disjoint.
\qed\ 
\end{proof}

\subsection{Multiple transitions}
We examine the case of transitions of the form $\cS\srightarrow \cB_1\cup\cdots
\cup\cB_p$, with $p\geq 2$, $\cS\in 2^N$ and formed only from sets in $\BB$,
$\cB_1,\ldots,\cB_p\in\BB$, and all $\cB_1,\ldots,\cB_p$ are pairwise
incomparable by inclusion\footnote{The ``$\cup$'' is understood at the level of
  collections of sets, i.e., $\cB_1\cup\cB_2=\{S\in 2^N\mid S\in\cB_1\text{ or }
  S\in\cB_2\}$.}. The analysis is done in the same way as for simple
transitions: the above transition exists if and only if for every $S\in\cS$,
$S\srightarrow \cB'$ with $\cB'\in\BB$ and $\cB'\subseteq \cB_1\cup\cdots\cup
\cB_p$ and there exist distinct $S_1,\ldots, S_p\in\cS$ such that
$S_j\srightarrow\cB_j$ for $j=1,\ldots, p$, which readily shows that $\cS$
cannot be a singleton. More explicitly, using previous notation and denoting by
$\supp(\cS)=\{|S|\ : \ S\in\cS\}$ the support of $\cS$, we get:
\begin{equation}\label{eq:5}
\cS\srightarrow
\cB_1\cup\cdots\cup\cB_p\Leftrightarrow \begin{cases}\displaystyle\supp(\cS)\subseteq\bigcup_{j=1}^p\cC[\cB_j]\cup\bigcup_{j=1}^p\cC[\{\uB_j\}]\cup\bigcup_{j=1}^p\cC[\{\oB_j\}]
  \\ \displaystyle \supp(\cS)\cap\cC[\cB_j]\neq\emptyset, \ \ j=1,\ldots,p.\end{cases}
\end{equation}
Let us investigate what the possible candidates for $\cB_1\cup\cdots\cup
\cB_p$ are. We begin by restricting to nontrivial intervals and $p=2$. From
Table~\ref{tab:1}, we find:
\begin{enumerate}
\item $\cS\srightarrow[\emptyset,N^a]\cup[\emptyset,N^c]$ if and only if
\begin{equation}\label{eq:mt1}
\supp(\cS)\subseteq [0,l^c]\cup [n-r^a,n]\text{ and }\begin{cases} \supp(\cS)\cap
\left]l^a,n-r^a\right[\cap[0,l^c]\neq\emptyset \\ 
\supp(\cS)\cap\left] l^c,n-r^c\right[\cap [n-r^a,n]\neq\emptyset\end{cases}; 
\end{equation}
\item $\cS\srightarrow[\emptyset,N^a]\cup[N^c,N]$ if and only if
\begin{equation}\label{eq:mt2}
\supp(\cS)\subseteq [0,l^c]\cup [n-r^c,n]\text{ and } \begin{cases}\supp(\cS)\cap
\left]l^a,n-r^a\right[\cap [0,l^c]\neq\emptyset\\
\supp(\cS)\cap\left]l^a,n-r^a\right[\cap [n-r^c,n]\neq\emptyset\end{cases}; 
\end{equation}
\item $\cS\srightarrow[N^a,N]\cup[\emptyset,N^c]$ if and only if
\begin{equation}\label{eq:mt3}
\supp(\cS)\subseteq [0,l^a]\cup [n-r^a,n]\text{ and } \begin{cases} \supp(\cS)\cap
\left]l^c,n-r^c\right[\cap[0,l^a]\neq\emptyset\\
\supp(\cS)\cap\left]l^c,n-r^c\right[\cap[n-r^a,n]\neq\emptyset\end{cases};
\end{equation}
\item  $\cS\srightarrow[N^a,N]\cup[N^c,N]$ if and only if
\begin{equation}\label{eq:mt4}
\supp(\cS)\subseteq [0,l^a]\cup [n-r^c,n]\text{ and } \begin{cases} \supp(\cS)\cap
\left]l^c,n-r^c\right[\cap[0,l^a]\neq\emptyset\\
\supp(\cS)\cap\left]l^a,n-r^a\right[\cap[n-r^c,n]\neq\emptyset\end{cases},
\end{equation}
\end{enumerate}
the other combinations $[\emptyset,N^a]\cup[N^a,N]$ and
$[\emptyset,N^c]\cup[N^c,N]$ being impossible as it can be checked. This readily
shows that $p>2$ with nontrivial intervals is impossible since a forbidden
combination would appear in the list.

We consider now that singletons may appear. We begin by noticing that there is no
absorbing class of the form $\{S_1,\ldots, S_p\}$ with
$S_j\in\{\emptyset,N,N^a,N^c\}$ for all $j$ and $p\geq 2$. Indeed, Table~\ref{tab:3} shows
that transitions from a set $S$ can only lead to a single $T$, with no
possibility of multiple transition. Hence, such collections would never be
connected.

Let us examine the case $\cS\srightarrow\cB_1\cup\{S\}$, where $\cB_1$ is a
nontrivial interval. With $[\emptyset,N^a]\cup\{N\}$ we obtain:
\[
\supp(\cS)\subseteq [0,l^c]\cup([0,l^a]\cap[n-r^c,n]) \text{ and }\begin{cases}
\supp(\cS)\cap[0,l^c]\cap \left]l^a,n-r^a\right[\neq\emptyset\\
\supp(\cS)\cap[0,l^a]\cap [n-r^c,n]\neq\emptyset\end{cases},
\]
which is impossible. With $[\emptyset,N^a]\cup\{N^c\}$ we obtain
\begin{equation}\label{eq:mt5}
\supp(\cS)\subseteq [0,l^c]\cup([n-r^a,n]\cap[n-r^c,n]) \text{ and }\begin{cases}
\supp(\cS)\cap[0,l^c]\cap \left]l^a,n-r^a\right[\neq\emptyset\\
\supp(\cS)\cap[n-r^c,n]\cap [n-r^a,n]\neq\emptyset\end{cases},
\end{equation}
which is possible.  Similarly, we find that $[\emptyset,N^c]\cup\{N\}$,
$[N^a,N]\cup\{\emptyset\}$ and $[N^c,N]\cup\{\emptyset\}$ are impossible, while
the following are possible:
\begin{enumerate}
\item $\cS\srightarrow[\emptyset,N^c]\cup\{N^a\}$ iff 
\begin{equation}\label{eq:mt6}
\supp(\cS)\subseteq [n-r^a,n]\cup([0,l^a]\cap[0,l^c]) \text{ and }\begin{cases}
\supp(\cS)\cap[n-r^a,n]\cap\left]l^c,n-r^c\right[\neq\emptyset\\
\supp(\cS)\cap[0,l^a]\cap [0,l^c]\neq\emptyset\end{cases},
\end{equation}
\item $\cS\srightarrow[N^a,N]\cup\{N^c\}$ iff 
\begin{equation}\label{eq:mt7}
\supp(\cS)\subseteq [0,l^a]\cup([n-r^a,n]\cap[n-r^c,n]) \text{ and }\begin{cases}
\supp(\cS)\cap[0,l^a]\cap \left]l^c,n-r^c\right[\neq\emptyset\\
\supp(\cS)\cap[n-r^a,n]\cap [n-r^c,n]\neq\emptyset\end{cases},
\end{equation}
\item $\cS\srightarrow[N^c,N]\cup\{N^a\}$ iff 
\begin{equation}\label{eq:mt8}
\supp(\cS)\subseteq [n-r^c,n]\cup([0,l^a]\cap[0,l^c]) \text{ and }\begin{cases}
\supp(\cS)\cap[n-r^c,n]\cap\left]l^a,n-r^a\right[\neq\emptyset\\
\supp(\cS)\cap[0,l^a]\cap [0,l^c]\neq\emptyset\end{cases}.
\end{equation}
\end{enumerate}
This shows that transitions of the form
$\cS\srightarrow\cB\cup\{S_1\}\cup\{S_2\}$ are not possible since a forbidden
configuration would appear.

We are now in position to study aperiodic absorbing classes. 
\begin{enumerate}
\item With $\cS=[\emptyset,N^a]\cup[\emptyset,N^c]$, we find from (\ref{eq:mt1}) that
\[
[0,n^a\vee n^c]\subseteq [0,l^c]\cup [n-r^a,n] \text{ and } \begin{cases}
[0,n^a\vee n^c]\cap\left]l^a,n-r^a\right[\cap[0,l^c]\neq\emptyset\\
[0,n^a\vee n^c]\cap\left]l^c,n-r^c\right[\cap[n-r^a,n]\neq\emptyset\end{cases}
\]
which is equivalent to 
\begin{equation}\label{eq:p1}
n^a\vee n^c>l^c\geq n-r^a.
\end{equation}

We check connectedness. We begin by a simple observation. We have
$\emptyset\srightarrow N^a$, therefore we must forbid the transitions
$N^a\srightarrow\emptyset$ and $N^a\srightarrow N^a$. Using Table~\ref{tab:1}
and (\ref{eq:p1}), we find that
$n^a\in\left]l^a,n-r^a\right[\cup\left]l^c,n\right[$.  Suppose that $n^a
\in\left]l^a,n-r^a\right[$. From Table~\ref{tab:1}, we obtain that
$N^a\srightarrow[\emptyset,N^a]\srightarrow[\emptyset,N^a]$, hence no
connection to $[\emptyset,N^c]$ is obtained.  Therefore we are forced to
consider $n^a\in\left]l^c,n\right[$, which with (\ref{eq:p1}) leads to
\begin{equation}\label{eq:p2}
n^a>l^c\geq n-r^a.
\end{equation}
From Table~\ref{tab:1} again, this implies $N^a\srightarrow [\emptyset,N^c]$
when $n^a\in\left]l^c,n-r^c\right[$, or $N^a\srightarrow N^c$ when
    $n^a\in\left[n-r^c,n\right[$. We distinguish the two cases.

1. Suppose $n^a\in\left]l^c,n-r^c\right[$, so we have $\emptyset\srightarrow
        N^a\srightarrow[\emptyset,N^c]$. In order to connect $\emptyset,N^a$ to
        any set in $\left]\emptyset,N^a\right[$, there must exist
        $S\in[\emptyset,N^c]$ such that $S\srightarrow [\emptyset,N^a]$, i.e.,
        $s\in \left]l^a,n-r^a\right[\cap [0,l^c]=\left]l^a,n-r^a\right[$ by
        (\ref{eq:p2}). This is possible iff $n^c>l^a$.
Let us check whether $N^c$ is connected to any set in the
class. From Table~\ref{tab:1} and the condition $n^c>l^a$, we see that there is a possible transition to
$\emptyset$, which suffices to prove that $N^c$ is connected to any set in the
class, except if $n^c\in[n-r^c,n]$ in which case $N^c\srightarrow
N^c$. Therefore, we must ensure the following condition:
\begin{equation}\label{eq:p3}
n^c\in\left]l^a,n-r^c\right[.
\end{equation}
 We check similarly whether any other set in the class is connected  with
the rest.  Take $S\in\left]\emptyset,N^a\right[$. If $s\leq l^c$, there will be
either a possible transition to $\emptyset$ or to $N^a$, so that $S$ is
connected to any set in the class. If $s>l^c$, $S$ behaves like $N^a$ and we are
done. Take now $S\in\left]\emptyset,N^c\right[$. If $s\leq l^a$, then
$S\srightarrow N^a$ and we are done. If $s\in\left]l^a,l^c\right]$, $S$ has a
    possible transition to $\emptyset$ and we are done. Finally, if
    $s\in\left]l^c,n-r^c\right[$, $S$ behaves like $N^c$. In conclusion,
    (\ref{eq:p3}) summarizes the condition for connectedness in Case 1.

2. Suppose $n^a\in\left[n-r^c,n\right[$, so we have $\emptyset\srightarrow
        N^a\srightarrow N^c$. We must ensure that $N^c$ is connected to any set
        in the class. In order to avoid $N^c\srightarrow N^c$ and the
        transitions $N^c\srightarrow N^a$ and $N^c\srightarrow \emptyset$ which
        would lead to cycles, we are left with the cases
        $n^c\in\left]l^a,n-r^a\right[$ (yielding $N^c\srightarrow
    [\emptyset,N^a]$) and $n^c\in \left]l^c,n-r^c\right[$ (yielding
    $N^c\srightarrow[\emptyset,N^c]$). We examine both cases.

2.1. Suppose  $n^c\in\left]l^a,n-r^a\right[$, then we have $N^c\srightarrow
    [\emptyset,N^a]$. It remains to ensure that there exists
    $S\in\left]\emptyset,N^a\right[$ which is connected with
    $[\emptyset,N^c]$. We must have $s\in\left]l^c,n-r^c\right[$, always possible
      under Case 2. So we have established that $\emptyset,N^a,N^c$ are
      connected with the rest of the class. It remains to check if this is true
      for the other sets in the class. Take $S\in\left]\emptyset,N^a\right[$. If
        $s\leq l^c$, a transition to $\emptyset$ of $N^a$ is possible, and so we
        are done. If $s\in\left]l^c,n\right[$, then $S\rightarrow N^c$, and we
        are done. Take now $S\in\left]\emptyset,N^c\right[$. Then
        $s\in\left]0,n-r^a\right[$, so that $S\rightarrow N^a$ and we are
        done. As a conclusion, connectedness holds when
        $n^c\in\left]l^a,n-r^a\right[$. 

2.2. Suppose $n^c\in \left]l^c,n-r^c\right[$, then
        $N^c\srightarrow[\emptyset,N^c]$. It remains to connect some set $S$ in
    $\left]\emptyset,N^c\right[$ to $[\emptyset,N^a]$, which is possible iff
    $s\in\left]l^a,n-r^a\right[$. This is possible under Case 2, so $N^c$ is
    connected to any set in the class. We check for the remaining sets. Take
    $S\in\left]\emptyset,N^a\right[$. If $s\leq l^c$, a connection is possible
    to $N^a$ or $\emptyset$ so we are done. Otherwise, a connection to $N^c$ is
    possible and we are done. For $S\in\left]\emptyset,N^c\right[$, it works
    exactly the same. 

In conclusion of Case 2, connectedness is ensured iff
$n^c\in\left]l^a,n-r^a\right[\cup \left]l^c,n-r^c\right[$.

There does not seem to be a simple way to write the final condition. Here is one
possible:
connectedness holds iff $l^c\geq n-r^a$ and
\[
n^c\in\big(\left]r^c,n-l^c\right[\cap
    \left]l^a,n-r^c\right[\big)\cup\big(\big(\left]l^a,n-r^a\right[\cup\left]l^c,n-r^c\right[\big)\cap
    \left]0,r^c\right]\big). 
\]
\item Similarly, using (\ref{eq:mt2}), $\cS=[N^a,N]\cup[N^c,N]$ is an absorbing
  class if and only if $l^a \geq n-r^c $ and $ n^c \in \big(\left]l^c,n-r^c\right[\cap
  \left]r^a,n-l^c\right[\big)\cup\big(\big(\left]r^a,n-l^a\right[\cup\left]r^c,n-l^c\right[\big)\cap
  \left]0,l^c\right]\big)$.
\item With $\cS=[\emptyset,N^c]\cup[N^a,N]$ we find from (\ref{eq:mt3}) the
  condition $l^c\vee r^c<n^c\leq l^a\wedge r^a$. Let us check
  connectedness. Starting from $\emptyset$, we have $\emptyset\srightarrow
  N^a$, and by Table~\ref{tab:1} and the above condition we have
  $N^a\srightarrow [\emptyset,N^c]$ if $n^a>l^c$, and $N^a\srightarrow
  \emptyset$ otherwise. Clearly, the latter must be forbidden otherwise a cycle
  occurs. Therefore, we must have $n^a>l^c$. Moreover, we have $N^c\srightarrow
  [N^a,N]$ if $n^c<n-r^c$ and $N^c\srightarrow N$ otherwise. Since
  $N\srightarrow N^c$, the latter must be forbidden to avoid a cycle. Therefore,
  we must have $n^c<n-r^c$. Under these condition, from $\emptyset$ or $N^a$ or
  $N^c$, any set can be attained. Now, taking $S\in\left]\emptyset,N^c\right[$,
    we have $S\srightarrow N^a$ or $S\srightarrow[N^a,N]$ so that $S\rightarrow
    N^a$ and we are done. Lastly, taking $S\in\left]N^a,N\right[$, we have
    $S\srightarrow N^c$ or $[\emptyset,N^c]$ and we are done. As a conclusion,
    the condition is $l^c\vee r^c<n^c\leq l^a\wedge r^a$ and
    $n^c<(n-l^c)\wedge(n-r^c)$, but then we obtain the periodic absorbing class
    studied before. Indeed, we see from the proof that we have necessarily
    $[\emptyset,N^c]\srightarrow[N^a,N]\srightarrow[\emptyset,N^c]$. 
\item With $\cS=[\emptyset,N^a]\cup[N^c,N]$, using (\ref{eq:mt2}), we find that $l^a\vee r^a<
  n^a\leq l^c\wedge r^c$.  Suppose first $l^c+r^c<n-1$. Then  $\cS$ cannot be
  connected. Indeed, starting from $N^a$, we have from Table~\ref{tab:1} that
  for any set $S\in[\emptyset,N^a]$, we have either $S\srightarrow N^a$, or $S\srightarrow [\emptyset, N^a]$ or $S\srightarrow \emptyset$. Therefore, $[\emptyset,N^a]$
  is not connected with every set in $\cS$.

   Suppose now that $l^c+r^c=n-1$. The first condition in (\ref{eq:mt2})
    reduces to the void condition $\supp(\cS)\subseteq [0,n]$. By the second
    condition we deduce $l^a<l^c$ and $n-r^c<n-r^a$. We check connectedness by
    using Table~\ref{tab:1}. We must have $N^a\srightarrow[N^c,N]$, which
    happens iff $n-r^c\leq n^a< n-r^a$. Now, observe that for any
    $S\in[\emptyset,N^a]$, the transition is either in $[\emptyset, N^a]$ or
    $N^a$ (when $s\leq l^c$), or in $[N^c,N]$. To ensure that $[N^c,N]$ is
    connected to $[\emptyset, N^a]$, we must have $N^c\srightarrow[\emptyset,
      N^a]$, which happens iff $l^a<n^c\leq l^c$. Then any set $S\in[N^c,N]$ has
    a transition to either $[\emptyset, N^a]$ or $N^a$ (if $s\leq l^c$) or to
    $[N^c,N]$ or $N^c$. In summary, this class exists iff $l^c+r^c=n-1$,
    $n-r^c\leq n^a<n-r^a$ and $l^a<n^c\leq l^c$.  
\item We show that $[\emptyset,N^a]\cup\{N^c\}$ cannot be connected when
  $l^c+r^c\neq n-1$. Indeed, we must have $N^c\srightarrow [\emptyset,N^a]$
  or $N^c\srightarrow N^a$,
  which implies by Table~\ref{tab:1} the condition $n^c\leq l^c$. However, by
  (\ref{eq:mt5}) and the condition $l^c+r^c\neq n-1$, $\supp(\cS)$ must be in
  two disjoint intervals, implying that $[0,n^a]\subseteq [0,l^c]$ and
  $n^c\in[n-r^c,n]$, a contradiction.

We suppose now $l^c+r^c=n-1$ and $n-r^a\leq n-r^c$, so that in (\ref{eq:mt5})
the first condition reduces to the void condition $\supp(\cS)\subseteq [0,n]$.
 Observe that the second condition implies $l^c>l^a$. To ensure
  connectedness, we must have a transition from $N^a$ to $N^c$, which happens iff
  $n^a\geq n-r^c$. Also, we must ensure $N^c\srightarrow[\emptyset,N^a]$, which
  happens iff $l^a<n^c<n-r^a$. Finally, we must ensure that any
  $S\in[\emptyset,N^a]$ is such that either $S\srightarrow N^c$ or
      $S\srightarrow [\emptyset, N^a]$ or $S\srightarrow N^a$. The two latter
  transitions arise when $s\leq l^c$, while the former transition arises when
  $s\geq n-r^c$. Since $n-r^c=l^c+1$, no other transition can
  happen. Connectedness is then proved. Finally, it can be checked that the
  second condition in (\ref{eq:mt5}) is satisfied. In summary, this class exists
  iff $l^c+r^c=n-1$, $n-r^a\leq n-r^c$, $l^c>l^a$, $n^a\geq n-r^c$ and $l^a<n^c<n-r^a$.

\item With $[\emptyset,N^c]\cup\{N^a\}$, we find from (\ref{eq:mt6}) and the
  assumption $l^a+r^a\neq n-1$ that $\supp(\cS)$ must be in two disjoint
  intervals, which forces $n-r^a\leq n^a<n-r^c$ and $n^c\leq l^a\wedge l^c$. We
  know already that $[\emptyset,N^c]\srightarrow N^a \srightarrow [\emptyset,N^c]$
  is a periodic class. Let us show that this is the only possibility. Indeed,
  otherwise there should exist $S\in[\emptyset,N^c]$ such that
  $S\srightarrow[\emptyset, N^c]$. This would imply that $l^c<s<n-r^c$, which is
  impossible by the condition $n^c\leq l^c$.

Let us consider now that $l^a+r^a= n-1$ and $l^c\geq l^a$, so that in
(\ref{eq:mt6}) the first condition simply reduces to the void condition
$\supp(\cS)\subseteq [0,n]$, while the second becomes: either
$n^a\in\left]l^c,n-r^c\right[$ or $n^c>l^c$. Let us check connectedness. We must
    have $N^a\srightarrow [\emptyset, N^c]$ or $N^a\srightarrow N^c$. The first
    case happens iff $n^a\in\left]l^c,n-r^c\right[$. Then observe that without
    further condition on $n^c$, any set in $[\emptyset, N^c]$ is connected to
    either $N^a$, $\emptyset$, $[\emptyset,N^c]$ or $N^c$. It suffices then to
    forbid the transition $N^c\srightarrow N^c$, i.e., $n^c<n-r^c$. The second
    case happens iff $n^a\geq n-r^c$, which forces $n^c>l^c$. To ensure that
    $N^c$ is connected to $[\emptyset,N^c]$, we must have $l^c<n^c<n-r^c$. 
    Then any set in $[\emptyset, N^c]$ has a transition to either
    $N^a,\emptyset$ or $[\emptyset,N^c]$. In summary, this class exists iff
    $l^a+r^a= n-1$, $l^c\geq l^a$, and either $n^a\in\left]l^c,n-r^c\right[$ and
    $n^c<n-r^c$, or $n^a\geq n-r^c$ and $l^c<n^c<n-r^c$.
\item The case of $[N^a,N]\cup \{N^c\}$ is similar to its symmetric
  $[\emptyset,N^c]\cup\{N^a\}$.  The class exists iff $l^a+r^a= n-1$, $n-r^c\leq
  n-r^a$, and either $n^c\in\left]l^c,n-r^c\right[$ and $n^a>l^c$, or $n^c\leq
      l^c$ and $l^c<n^a<n-r^c$.
\item The case of $[N^c,N]\cup \{N^a\}$ is similar to its symmetric
  $[\emptyset,N^a]\cup \{N^c\}$. The class exists iff $l^c+r^c=n-1$,  $n-l^a\leq n-l^c$, $r^c>r^a$, $n^a\geq n-l^c$ and $r^a<n^c<n-l^a$.
\end{enumerate}

It remains to study the existence of periodic classes. Since the collections
must be pairwise disjoint, the only possibility is the periodic class
$[\emptyset,N^a]\cup[\emptyset,N^c]\srightarrow N\srightarrow
[\emptyset,N^a]\cup[\emptyset,N^c]$. But we know that the second transition is
impossible since a singleton cannot lead to a multiple transition. Hence, there
are no such periodic absorbing classes.

\begin{landscape}
\begin{table}[p]
\begin{center}
\small
\begin{tabular}{|c||c|c|c|c||c|c|c|c|}\hline
 $\cS \backslash\cT$ & $\emptyset$ & $N^a$ & $[\emptyset,N^c]$ & $[\emptyset,N^a]$ & $[N^c,N]$ &
  $[N^a,N]$ & $N^c$ & $N$ \\ \hline
$\emptyset$ & $\times$ & always & $\times$ & $\times$ &  $\times$ &  $\times$ &
  $\times$ & $\times$ \\ \hline
$N^a$ & $n-l^c\leq n^c\leq r^a$ & $\begin{array}{c}n^c\geq n-l^c\\n^c\geq n-l^a\end{array}$ & $\begin{array}{c} n^c
     \leq r^a\\r^c<n^c<n-l^c\end{array}$ & $\begin{array}{c} n^c\geq
       n-l^c\\ r^a<n^c<n-l^a\end{array}$ & $\begin{array}{c}
         n^c\leq r^c\\r^a<n^c<n-l^a\end{array}$ & $\begin{array}{c}
         n^c\geq n-l^a\\r^c<n^c<n-l^c\end{array}$ & $n^c\leq r^c\wedge r^a$ &
         $n-l^a\leq n^c\leq r^c$\\ \hline
$[\emptyset,N^c]$ & $\times$ & $n^c\leq l^c\wedge l^a$ & $\times$ &
         $l^a<n^c\leq l^c$ & $\times$ &
           $l^c<n^c\leq l^a$ & $\times$
             & $\times$ \\ \hline
$[\emptyset,N^a]$ & $\times$ & $\begin{array}{c}n^c\geq n-l^c\\n^c\geq
               n-l^a\end{array}$ & $\times$ & 
         $\begin{array}{c} n-l^c \leq n^c\\ n^c<n-l^a \end{array}$ & $\times$ &
           $\begin{array}{c} n-l^a\leq n^c\\ n^c<n-l^c \end{array}$ & $\times$
             & $\times$ \\ \hline\hline
$[N^c,N]$ & $\times$ & $\times$ & $\begin{array}{c} n-r^a \leq n^c\\ n^c<n-r^c \end{array}$ & $\times$ &
                     $\begin{array}{c} n-r^c \leq n^c\\ n^c <n-r^a \end{array}$
                       & $\times$ & $\begin{array}{c}n^c\geq n-r^c\\n^c\geq n-r^a\end{array}$ & $\times$ \\ \hline
$[N^a,N]$ & $\times$ & $\times$ & $r^c<n^c\leq r^a $ & $\times$ &
                     $r^a< n^c\leq r^c$
                       & $\times$ & $n^c\leq r^c\wedge r^a$ & $\times$ \\ \hline
$N^c$ & $n-r^a\leq n^c\leq l^c$ & $n^c\leq l^c\wedge l^a$ & $\begin{array}{c} n^c
     \geq n-r^a\\l^c<n^c<n-r^c\end{array}$ & $\begin{array}{c} n^c\leq
       l^c\\ l^a<n^c<n-r^a\end{array}$ & $\begin{array}{c}
         n^c\geq n-r^c\\l^a<n^c<n-r^a\end{array}$ & $\begin{array}{c}
         n^c\leq l^a\\l^c<n^c<n-r^c\end{array}$ & $\begin{array}{c}n^c\geq
           n-r^c\\n^c\geq n-r^a\end{array}$ &
         $n-r^c\leq n^c\leq l^a$\\ \hline
$N$ & $\times$ & $\times$ & $\times$ & $\times$ &  $\times$ &  $\times$ &
  always & $\times$ \\ \hline
\end{tabular}
\vspace{2 mm}
\caption{Conditions for sure transitions $\cS$ to $\cT$}
\label{tab:3}
\end{center}
\end{table}
\end{landscape}

\begin{table}[htb]
\begin{center}
\begin{tabular}{|c|c|}\hline
$N^a\srightarrow \emptyset\srightarrow N^a$ & $n-l^c\leq n^c\leq r^a$\\ \hline
$N^c\srightarrow \emptyset\srightarrow N^a$ & $n-r^a\leq n^c\leq l^c$\\ \hline\hline
$\emptyset\srightarrow N^a\srightarrow [N^c,N]$ & $\begin{array}{c}
         n^c\leq r^c\\r^a<n^c<n-l^a\end{array}$\\ \hline
$\emptyset\srightarrow N^a\srightarrow N^c$ & $n^c\leq r^c\wedge r^a$\\ \hline
$\emptyset\srightarrow N^a\srightarrow N$ & $n-l^a\leq n^c\leq r^c$\\  \hline
$N^c\srightarrow N^a\srightarrow \emptyset$ & $n-l^c\leq
         n^c\leq l^c\wedge l^a\wedge r^a$\\ \hline
 $[\emptyset, N^c]\srightarrow N^a\srightarrow [\emptyset,N^c]$ &
         $\begin{array}{c}n^c \leq l^c\wedge l^a\wedge
           r^a\\r^c<n^c<n-l^c\end{array}$\\ \hline 
$N^c\srightarrow N^a\srightarrow N^c$ & $n^c\leq l^c\wedge
         l^a\wedge r^c\wedge r^a$\\ \hline
$N^c$ or $[\emptyset, N^c]\srightarrow N^a\srightarrow N$ & $n-l^a\leq n^c\leq
         l^a\wedge l^c\wedge r^c$\\ \hline \hline
$N^a\srightarrow [\emptyset,N^c]\srightarrow N^a$ &
         $\begin{array}{c} n^c\leq l^a\wedge l^c\wedge
           r^a\\r^c<n^c<n-l^c\end{array}$\\ \hline
$[N^a,N]\srightarrow [\emptyset,N^c]\srightarrow [N^a,N]$
             & $l^c\vee r^c< n^c\leq r^a\wedge l^a$\\ \hline
\end{tabular}
\vspace{2 mm}
\caption{Conditions for chains of length 2 potentially yielding periodic classes}
\label{tab:4}
\end{center}
\end{table}

\end{document}